\documentclass[aps,prl,10pt,twocolumn,showpacs, superscriptaddress]{revtex4-2}
\usepackage[utf8]{inputenc}
\usepackage{amsmath,amssymb,amsfonts,amsthm,amsopn,amscd,bm}
\usepackage{dsfont}
\usepackage{multirow}
\usepackage{graphicx}
\usepackage{booktabs}
\usepackage[shortlabels]{enumitem}
\usepackage{comment}
\usepackage{xcolor}
\usepackage{thmtools, thm-restate}

\usepackage[normalem]{ulem} 

\usepackage[breaklinks]{hyperref}
\hypersetup{colorlinks,
	linkcolor={red!40!black},
	citecolor={blue!50!black},
	urlcolor={blue!20!black}
}
\usepackage{orcidlink}

\makeatletter
\renewcommand*\env@matrix[1][*\c@MaxMatrixCols c]{%
	\hskip -\arraycolsep
	\let\@ifnextchar\new@ifnextchar
	\array{#1}}
\makeatother

\DeclareMathOperator{\tr}{tr}

\DeclareMathOperator{\rank}{rank}

\DeclareMathOperator{\one}{\mathds{1}}

\newcommand{\bra}[1]{\mathinner{\langle #1|}}
\newcommand{\ket}[1]{\mathinner{|#1\rangle}}
\newcommand{\braket}[2]{\mathinner{\langle #1|#2\rangle}}
\newcommand{\ketbra}[2]{\mathinner{|#1\rangle \langle#2|}}
\newcommand{\dyad}[1]{| #1\rangle \langle #1|}
\newcommand{\ot}[0]{\otimes}

\usepackage{accents}
\newcommand\thickbar[1]{\accentset{\rule{.4em}{.8pt}}{#1}}

\newcommand{\Vop}[0]{V}
\newcommand{\Wop}[0]{W}
\newcommand{\CZ}[0]{\operatorname{CZ}}
\newcommand{\CX}[0]{\operatorname{CX}}

\newcommand{\pX}[0]{X}
\newcommand{\pZ}[0]{Z}
\newcommand{\pY}[0]{Y}
\newcommand{\Had}[0]{H}
\newcommand{\CZbar}[0]{\operatorname{C\thickbar{Z}}}
\newcommand{\CXbar}[0]{\operatorname{C\thickbar{X}}}
\newcommand{\Xbar}[0]{\thickbar{X}}
\newcommand{\Zbar}[0]{\thickbar{Z}}
\newcommand{\bmR}[1]{\bm{\mathrm{#1}}}
\newcommand{\op}[1]{\operatorname{#1}}

\newcommand{\E}[0]{\operatorname{E}}

\newcommand{\inc}{\operatorname{in}}
\newcommand{\enc}{\operatorname{enc}}

% Define Number Sets
\newcommand{\N}{\mathds{N}}

\newcommand{\C}{\mathds{C}}

\newcommand{\HH}{\mathcal{H}}
\newcommand{\PP}{{\partial \mathcal{B}}}
\newcommand{\MM}{\mathcal{B}}
\newcommand{\CC}{\mathcal{C}}

\newcommand{\regE}{E}
\newcommand{\regdE}{\partial E}
\newcommand{\regF}{F}
\newcommand{\regdF}{\partial F}

\setcounter{secnumdepth}{1}

\usepackage{blindtext}

\makeatletter
\newcommand\emailx[1]{%
	\move@AF%
	\def\@affil{{\normalfont\,#1\strut}{}}%
}

\begin{document}
	
	\title{Engineering holography with stabilizer graph codes}
	
	\date{\today}
	
	\author{Gerard Anglès Munné${}^{\orcidlink{0000-0002-6168-9708}}$}
	\affiliation{Faculty of Physics, Astronomy and Applied Computer Science, 
		Institute of Theoretical Physics, Jagiellonian University, 
		30-348 Kraków, Poland}
	\emailx{ganglesmunne@gmail.com}
	
	\author{Valentin Kasper${}^{\orcidlink{0000-0001-7687-663X}}$}
	\affiliation{
		Institut de Ciències Fotòniques (ICFO), Mediterranean Technology Park, 08860 Castelldefels, Barcelona, Spain}
	\emailx{valentin.kasper@hotmail.de}
	
	\author{Felix Huber${}^{\orcidlink{0000-0002-3856-4018}}$}
	\affiliation{
		Institute of Theoretical Physics,
		Jagiellonian University, 
		30-348 Krak\'{o}w, 
		Poland}
	\affiliation{Bordeaux Computer Science Laboratory (LaBRI),
		University of Bordeaux,
		351 cours de la Liberation,
		33405,
		Talence,
		France}
	\emailx{felix.huber@physik.uni-siegen.de}

	\begin{abstract}
		The discovery of holographic codes 
		established a surprising connection
		between quantum error correction and the anti-de Sitter-conformal field theory correspondence.
		Recent technological progress in artificial quantum systems renders the experimental realization of such holographic codes now within reach. 
		Formulating the hyperbolic pentagon code in terms of a stabilizer graph code, 
		we give gate sequences that are tailored to systems with long-range interactions.
		We show how to obtain encoding and decoding circuits for the hyperbolic pentagon code, 
		before focusing on a small instance of the holographic code on twelve 
		qubits.
		Our approach allows to verify holographic properties by  partial decoding operations, recovering bulk degrees of freedom from their nearby boundary. 
	\end{abstract}
	
	\maketitle

	\section{Introduction}
	Holography emerged as a key concept in high-energy physics, gravity, and quantum information. With the introduction of the anti-de Sitter-conformal field theory (AdS-CFT) correspondence by Maldacena~\cite{Maldacena_1999}, holographic duality established a relation between two physical theories, one sitting in the bulk and the other sitting at the boundary of a hyperbolic space~\cite{Jahn_2021, Kohler_2019}. 
	In an effort to understand this AdS-CFT correspondence further, 
	geometrically arranged tensor networks arose as a useful tool, displaying unique entanglement properties~\cite{Hayden_2016,PhysRevLett.125.241602,PhysRevB.100.134203}
	This link between geometry and quantum entanglement led to 
	recent efforts seeking an experimental realization of holography~\cite{
		PhysRevLett.121.036403,
		PhysRevLett.128.013601,
		Kollar2019,
		Zhang2022}.
	However, an experimental realization of the proposed tensor networks is  challenging.

	The hyperbolic pentagon code proposed by~\cite{Pastawski_2015}, also known as HaPPY code, is today's premier toy model for understanding holographic duality. It is composed of a tensor-network with absolutely maximally entangled states (also known as perfect tensors) as basic building blocks. This model exhibits several desired features such as a uniform bulk and an entanglement entropy constrained by the Ryu-Takayanagi formula~\cite{Harlow_2017}.
	
	Despite recent theoretical investigations into the error-correcting capabilities of holographic codes
	\cite{PRXQuantum.3.020332,
		PRXQuantum.2.030337,
		PhysRevD.106.046009,
		PhysRevA.105.052446,
		PhysRevA.102.062417},
	experimental implementations have yet to come forward~\cite{Bhattacharyya_2022}. 
	In this work we close the gap towards an experimental implementation and bring
	the stabilizer approach to holography~\cite{PhysRevA.101.042305} to its logical conclusion by formulating it as a graph code.
	This opens up a path towards investigating AdS-CFT like models experimentally, 
	making their unique partial recovery features accessible for current and upcoming quantum technologies.

	Our framework yields a graph state from which we derive the gates necessary to encode, perform logical gates, and decode quantum information. 
	Additionally, it exhibits the essential characteristic of holographic systems, that is the ability to recover a bulk region from its nearby boundary. 
	
	While challenging, our toy model can already be implemented with as few as $12$ qubits, with experimental requirements that are within reach of current neutral atom platforms \cite{ebadi2021quantum}, superconducting qubits~\cite{arute2019quantum} and trapped ions experiments~\cite{zhang2017observation}.  
	
	Recent experimental efforts show the possibility to engineer long-range connectivity in neutral atom systems~\cite{Periwal2021, Bluvstein2022}, trapped ions~\cite{HAFFNER2008155,Monroe2021} and superconducting qubit platforms~\cite{arute2019quantum}. Depending on the platform the long-range interaction between the qubits is realized by coherently transporting the qubits or using a connecting bus.
	
	In contrast to more common methods which require stabilizer measurements to prepare the logical zero state~\cite{Abobeih_2022}, our approach reduces this task to a graph state preparation. For the trapped ions set up from~\cite{PRXQuantum.3.040310}, the graph state fidelity is estimated higher than the fidelity of the state prepared via stabilizer measurements.
	
	\section{Results}
	
	\subsection{Main contribution}
	
	We formulate the hyperbolic pentagon code introduced by Pastawski et al.~\cite{Pastawski_2015} in terms of its corresponding stabilizer graph code. 
	This allows to derive the encoding and decoding gates, as well as the partial recovery operations that are required to demonstrate holography experimentally.
	
	Our approach is based on three observations: 
	first, by choosing stabilizer states as building blocks, the hyperbolic pentagon code can be written in stabilizer form~\cite{PhysRevA.101.042305}.
	Second, a stabilizer code which encodes $k$ into $n$ qubits can be represented as a graph code~\cite{PhysRevA.65.012308}, that is, a graph state on $k+n$ systems. 
	Third, the number and range of interactions of this graph can be optimized through local Clifford operations, significantly reducing the experimental requirements to prepare the code states.
	
	This procedure allows us to represent the hyperbolic pentagon code as a stabilizer graph code in a manner suitable for experimental implementation
	(c.f. Fig.~\ref{fig:MinimalGraph}).
	In addition, our method also provides 
	the gates necessary to recover parts of the encoded bulk degrees of freedom from their nearby boundary, 
	thus demonstrating holographic features.
	
	Following the method described above, we propose an experimental implementation of a small instance of the hyperbolic pentagon code on twelve qubits 
	(c.f. Fig.~\ref{fig:MinimalGraph}).
	Through a stabilizer graph code representation, 
	we provide the gates required to encode and decode four bulk qubits into twelve boundary qubits. In addition, we show how to demonstrate holographic features of the logical code states. We list the specific gates needed for a partial decoding operation which recovers two bulk degrees of freedom from their nearby five-qubit boundary.
	
	This places the experimental implementation and certification of holographic systems within reach of current experimental
	high-connectivity platforms such as Rydberg atoms in optical tweezers, trapped ions, or cavity-coupled qubits. 
	Specifically by using trapped-ion setup from~\cite{PRXQuantum.3.040310}, we estimate a higher logical zero state fidelity with our method than with currently available ones~\cite{Abobeih_2022}.
	We suggest as first step towards the implementation of the holographic pentagon code, the preparation of the logical zero state.
	It is noteworthy that our proposal is of a scale that can be compared against numerical simulations.
	The reader only interested in the experimental implementation can directly jump to \hyperref[sect:Engholo]{\textit{A holographic model on 12 qubits}}.
	
	\begin{figure}[tbp]
		\centering
		\includegraphics[width=0.33\textwidth]{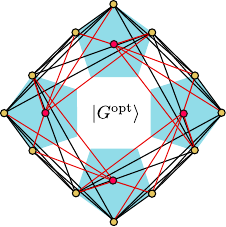}
		\caption{
			\label{fig:MinimalGraph}
			\textbf{A holographic graph code:}
			A small instance of the hyperbolic pentagon (HaPPY) code~\cite{Pastawski_2015},
			represented as a graph state or graph code.
			This representation is obtained from mapping
			the tensor network to a stabilizer state and then finding an experimentally suitable local Clifford equivalent graph state. 
			The entire construction can be described as a \hyperref[sect:holographcode]{\textit{holographic graph code}}.
		}
	\end{figure}
	
	\subsection{Holographic code} A holographic code encodes $k$ bulk qubits into $n$ boundary qubits with $n>k$, such that any bulk region can be recovered from its nearby boundary. The bulk degrees of freedom live in the Hilbert space $\HH_\MM$ and host the message, whereas the code space $\CC$ is a subspace of the boundary space $\HH_\PP$.
	The encoding of the bulk into the boundary is mathematically defined by a norm-preserving linear map $T_{\op{H}}:\HH_\MM \to \CC \subset \HH_\PP$, i.e. an isometry. The norm-preserving property of isometries can be written
	as $T^\dagger_{\op{H}} T_{\op{H}} = \one_{\MM}$. For our purposes the corresponding Hilbert spaces are
	$\HH_\MM={(\C^2)}^{\otimes k}$ and $\HH_\PP={(\C^2)}^{\otimes n}$ but qudit Hilbert spaces are also possible. Then, the isometry can explicitly be written as
	\begin{equation}\label{eq:isom}
		T_{\op{H}}=  \sum^{1}_{i_1, \dots, i_k=0} \ket{H_{i_1 \dots i_k}}_\PP\bra{i_1 \dots i_k}_\MM \,,
	\end{equation}
	mapping each computational basis element $\ket{i_1 \dots i_k}$ in $\HH_\MM$ to its corresponding logical state $\ket{H_{i_1\dots i_k}}$ in $\CC$. 
	Given the Pauli gates $\pX_j,\pY_j,\pZ_j$ acting on the $j$-th qubit in $\MM$, the isometry $T_{\op{H}}$ can be used to find the logical gates,  
	which act on $\CC$ in the same way as Pauli gates on $\HH_\MM$. 
	
	By turning the bra vector acting on $\MM$ into a ket, 
	the isometry in Eq.~\eqref{eq:isom} can be represented by an unnormalized quantum state 
	\begin{equation}\label{eq:state}
		\ket{H}=
		\sum^1_{i_1, \dots, i_k =0}  \ket{i_1 \dots i_k}_\MM \ot \ket{H_{i_1 \dots i_k}}_\PP\,,
	\end{equation}
	where we changed the order of the kets (bulk and boundary) for consistency with later sections.
	Therefore, the holographic code can be described by a state
	which we term {\em holographic state}.  
	From Eq.~\eqref{eq:state} we see that $\ket{H}$ is maximally entangled
	with respect to the  bipartition $\PP$ and $\MM$, which we denote by $\MM|\PP$. In reverse, each state of
	the form given in Eq.~\eqref{eq:state} induces an isometry. 
	Specifically, absolutely maximally entangled (AME) states, often referred to as perfect tensors when the number of parties is even, are maximally entangled with respect to any bipartition.
	
	An interesting way to construct a holographic code is via a tensor network that uses AME states as building blocks. Here we show how this holographic code can be understood as a graph code \cite{https://doi.org/10.48550/arxiv.1502.06618,Pastawski_2015}. 
	In a tensor network the tensors are connected by lines which 
	correspond to index contractions. 
	While contracting two AME states does not necessarily yield another AME state, the contraction is still an isometry.
	This fact can be directly seen from the tensor network representation
	of contracting two AME states.
	By assembling and contracting the tensors properly, one can construct a state that is maximally entangled across $\MM|\PP$. 
	Using such a maximally entangled state one can map bulk qubits to boundary qubits. 
	Since we use AME states as building blocks, the isometry is decomposed into further smaller isometries.
	These smaller isometries allow us to recover bulk qubits from their nearby boundary qubits, making the code holographic.
	
	The geometry of the tensor network determines how the decomposition of the 
	isometry is performed and, therefore, which information can be recovered. In this article we consider the hyperbolic pentagon code that contains six-qubit AME states as building blocks~\cite{Pastawski_2015}.
	This AME state was described first as {GF$(4)$}-{\em hexacode} in a seminal article by Calderbank et al.~\cite{681315} on the connection between classical and quantum stabilizer codes. The state was numerically rediscovered in Ref.~\cite{Borras_2007} and brought to graph state form in Ref.~\cite{https://doi.org/10.48550/arxiv.1306.2879}. When the number of parties is even, such states are often referred to as perfect tensors.
	Fig.~\ref{fig:holo}a illustrates the recovery for a specific boundary region: given the boundary qubits on the partition $\regdE$, it is possible to recover the bulk qubits on $\regE$ without using the qubits on~$\regdF$. 
	
	An important characteristic of holographic codes is
	that the Ryu-Takayanagi formula holds~\cite[Section 4]{Harlow_2017}.
	Roughly speaking, given a boundary bipartition $\regdE|\regdF$ and associated bulk regions $E|F$, 
	the Ryu-Takayanagi formula states that the entropy of the reduced state on $\regdE$ is proportional 
	to the length of the bulk geodesic 
	that separates $\regE$ and $\regF$, in addition to a bulk entropy term.
	For the holographic code considered here 
	the formula can be stated as
	$S(\regdE) \propto |\gamma|$ 
	for encoded product states,
	where $|\gamma|$ is the number of contracted indices in the tensor network that cross from the $\regE$ to $\regF$ (see Fig.~\ref{fig:holo}b). 
	Linked to this is  the ability to perform a partial bulk reconstruction, recovering a part of the bulk from its nearby boundary, a property which can be tested experimentally and will be addressed in \hyperref[sect:Pdecodingcircuit]{\textit{Partial decoding circuit}}.
	
	The construction of holographic codes via tensor networks
	is well suited to visualize the geometrical aspect of the
	code. On the other hand the stabilizer formalism is highly efficient
	in determining encoding and decoding strategies and to obtain the
	logical states and gates. Hence, we will discuss in the following section
	the stabilizer formalism in order to apply it to the hyperbolic pentagon code.
	
	\begin{figure*}[tbp]
		\centering
		\includegraphics[width=0.8\textwidth]{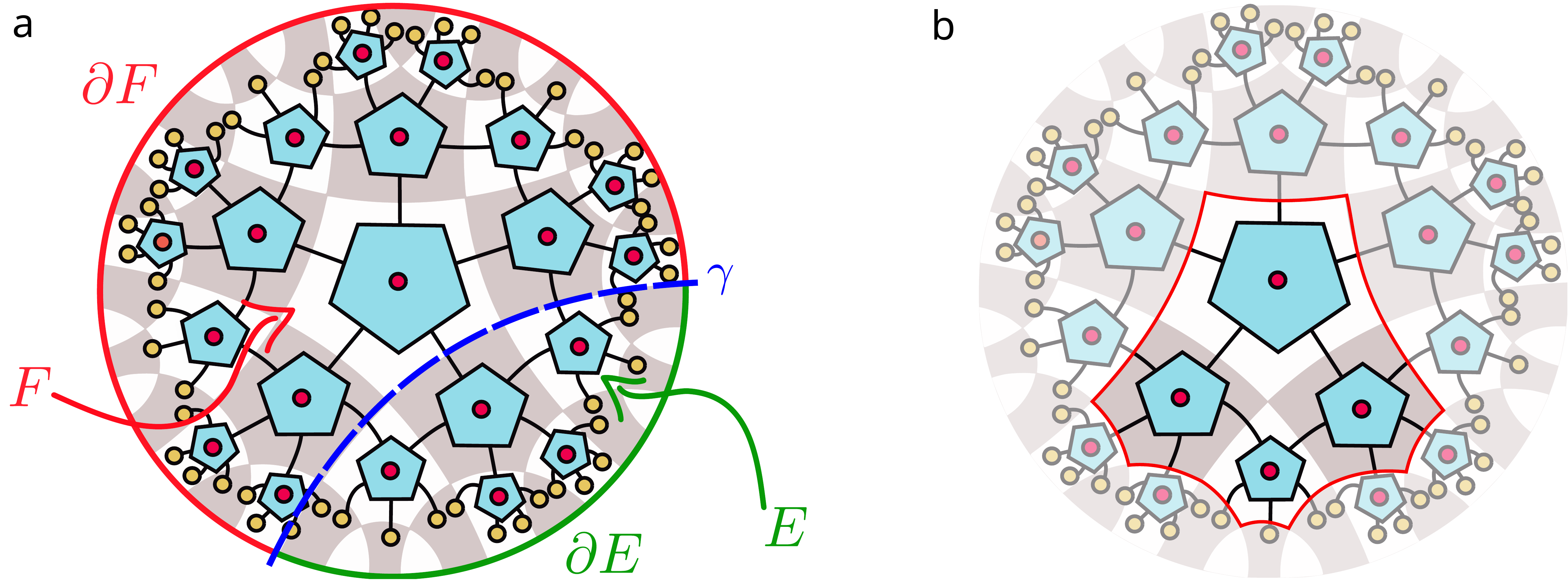}
		\caption{\textbf{Tensor network representation of the hyperbolic pentagon code.} The $k$ bulk qubits (in red) are encoded into $n$ boundary qubits (in gold). 
			Each pentagon represents a six-qubit perfect tensor, also known as absolutely maximally entangled (AME) state. 
			The geometry of the assembly makes the holographic state maximally entangled across $\MM|\PP$. 
			Fig.~\ref{fig:holo}a shows a specific region of the bulk $\regE$ which can be recovered by reading only the region $\regdE$ of the boundary, where
			the cut $\gamma$ is of size $|\gamma|=3$. 
			Fig.~\ref{fig:holo}b shows a small instance of the code (red) which we propose to prepare experimentally.}
		\label{fig:holo}
	\end{figure*}

	\subsection{Stabilizer states} An $m$-qubit stabilizer state is characterised by $\ell$ independent commuting operators $g_i$ that
	form the stabilizer $S=\langle g_1, \dots, g_l \rangle$, 
	with $\ell\leq m$ and $-\one \notin S$. 
	The $g_i$ are elements of the $m$-qubit Pauli group $\mathcal{P}^{m}$, which is formed by tensor products of Pauli matrices $\pX,\pY,\pZ$ and phases $\{\pm 1, \pm i\}$. Note however that the phases of the stabilizer elements are always real. 
	A stabilizer state \cite[Section 10.5.1]{nielsen_chuang_2010} can be written as
	\begin{equation}\label{eq:stab}
		\varrho
		=\frac{1}{2^\ell}\sum_{s \in S} s 
		=\frac{1}{2^\ell}\prod^\ell_{i=1}(\one+g_i) \,.
	\end{equation}
	The state $\varrho$ is proportional to a projector which acts on a subspace of dimension $2^{m-\ell}$. When $\ell=m$ the state is pure.
	
	A convenient way to represent a stabilizer state is through its {check matrix}. This is an $\ell \times 2m$ matrix $C = (A | B)$ whose rows correspond to the $\ell$ generators. 
	The matrix $A$ accounts for the $\pX$-part and $B$ for the $\pZ$-part of the generators: 
	$A_{ij} =1$ if $g_i$ contains an $\pX$ at position $j$,
	$B_{ij} =1$ if $g_i$ contains an $\pZ$ at position $j$, 
	$A_{ij} = B_{ij} =1$ if there is a $\pY$, 
	and $A_{ij} = B_{ij} =0$ if there is a $\one$. Therefore, the columns carry the information of how the generators act on individual qubits.
	The generators of two check matrices $C_1 = (A_1|B_1)$ and $C_2 = (A_2|B_2)$ commute if and only if
	\begin{equation}\label{eq:pc_comm}
		A_1 B^{\intercal}_2 - B_1 A^{\intercal}_2 = 0 \;.
	\end{equation}
	However, the parity check matrix does not carry all the information about $S$, since the signs of the generators are not included in $C$. 
	To keep track of the signs, we add an extra column $\omega$ to $C$
	such that 
	$C=(A|B|\omega)$,
	where 
	$\omega_i = 0$ if $g_i$ is positive 
	and 
	$\omega_i = 1$ if negative. 
	
	We recall that elementary row operations modulo $2$ on the check matrix leave the stabilizer invariant:
	the multiplication of generators corresponds 
	to the addition of the respective rows, and the multiplication and relabeling of generators does not affect $S$. It is important to remark that the multiplication of generators may change signs in $\omega$, e.g. $(\pX \otimes \pX)(\pZ \otimes \pZ)=-\pY \otimes \pY$. How the signs of the generators change is discussed in Appendix~\ref{app:stabilizer}.
	
	Graph states constitute a particular case of pure stabilizer states. These are defined by a graph of $m$ vertices connected by edges $e \in \E$. The generator associated with the vertex $i$ appearing in Eq.~\eqref{eq:stab} is
	\begin{equation}
		\label{eq:generators}
		g_i=\pX_i\bigotimes_{j\in N(i)}\pZ_j\,,
	\end{equation}
	where the neighbourhood $N(i)$ is the set of vertices $j$ connected to vertex $i$ by an edge.
	For a graph state, the check matrix reads $(\one | \Gamma)$ where $\Gamma$ is the adjacency matrix, representing the interaction between the qubits, and the phase vector is trivial $\omega=\bmR{0}$. An equivalent way to define a graph state is via controlled-$\pZ$ gates $\CZ_{uv}=\text{diag}(1,1,1,-1)$ acting on qubits $u$ and $v$ as
	\begin{equation}\label{eq:graphcz}
		\ket{G}=\prod_{(u,v)\in \E} \CZ_{uv} \ket{+}^{\otimes n}\,.
	\end{equation}
	
	Clifford operations are the unitaries that keep the Pauli group invariant under conjugation. An important example is the one-qubit Hadamard gate $\Had$ which acts as $\Had\pX\Had^\dag= \pZ$, $\Had\pZ\Had^\dag=\pX$ and $\Had\pY\Had^\dag=-\pY$. 
	It is known that all qubit stabilizer states are graph states up to 
	local Clifford operations (LC) \cite{PhysRevA.69.022316,10.3254/1-58603-660-2-115}.
	Many of the currently known AME states are graph states up to LC, with the notable exception of the recently discovered four-quhex AME state~\cite{PhysRevLett.128.080507}.

	\subsection{Holographic graph state}\label{sect:holographstate}
	Here we aim to find a suitable graph state~$\ket{G^{\text{opt}}}$ for experimental purposes which corresponds
	to the holographic state~$\ket{H}$ 
	and 
	to the tensor network in Fig.~\ref{fig:holo}a
	respectively.   
	
	An index contraction corresponds to projecting the tensor onto the Bell state and performing a partial trace. This fact can be seen from Eq.~\eqref{eq:Bellcontr} by defining an arbitrary state on $m$ qubits and a projector~$P^+=\ketbra{\phi^{+}}{\phi^{+}} \ot \one^{\ot m-2} $ with $\ket{\phi^{+}}=\sum^{1}_{r=0}\ket{rr}$ the (unnormalized) Bell state,
	\begin{align}
		\label{eq:Bellcontr}
		P^+\ket{\psi}
		&=
		\sum^1_{r,s=0}
		\sum^{1}_{i_1,\dots,i_{m}=0}
		\ket{rr} 
		\otimes 
		\braket{ss}{i_1i_2}
		\psi_{i_1\dots i_{m}}
		\ket{i_3\dots i_{m}}
		\nonumber\\
		&=\ket{\psi^+} \otimes \!\!\! \sum^{1}_{i_3,\dots,i_{m},s=0}\!\! 
		\psi_{ss i_3\dots i_{m}}
		\ket{i_3\dots i_{m}} 
		\nonumber\\ 
		&=\ket{\psi^+} \otimes \ket{\chi} \,.
	\end{align}
	Here, $\ket{\chi}$ is the state after the index contraction. 
	Note that the Bell state is a stabilizer state. Therefore, in case of $\ket{\psi}$ being a stabilizer state, so will $\ket{\chi}$ (see Ref.~\cite[Section 2]{Audenaert_2005}).
	That can be seen in more details in Appendix~\ref{app:stabilizer} where a method to obtain the generators of~$\ket{\chi}$ is found. 
	
	Our tensor network has the six-qubit AME state as a building block, which can be expressed as a stabilizer state~(see its graph representation in Fig.~\ref{fig:AME}). 
	Using this fact the contraction of the hyperbolic pentagon code leads to a stabilizer state $\ket{H}$.
	
	The state $\ket{H}$ can be transformed into a graph state $\ket{G}$ through a local Clifford operator $\Vop$ that is composed of a layer of one-qubit Hadamard gates followed by a layer of $\pZ$ gates,
	\begin{equation}\label{eq:Holotograph}
		\ket{G}=\Vop\ket{H}\,.
	\end{equation}
	This fact is shown in Appendix~\ref{app:graphH}
	The Hadamard gates transform the check matrix of $\ket{H}$ to $(\one|\Gamma|\omega)$, which is, up to the phase vector $\omega$, the check matrix of a graph state. The $\pZ$ gates applied set $\omega=\bmR{0}$. Note that, from the set of local Clifford operations, we only required Hadamard gates to transform $\ket{H}$ to a graph state, up to the signs of the generators.
	We emphasize that the contracted qubits 
	are not part of the holographic state but are only needed to construct the encoding.
	
	For a given graph state, there exist many other local unitary equivalent graphs. 
	While not all local unitary equivalent graph states 
	are local Clifford equivalent~\cite{DBLP:journals/qic/JiCWY10,Tsimakuridze_2017}, it significantly reduces the complexity of the problem by considering only the subset of local Clifford operations.
	To facilitate the implementation  
	where the boundary qubits are located according to their position in the tensor network, 
	we are interested in preparation protocols that require few interactions of shortest range only.
	In principle, a graph with such properties can be found by trying all possible mappings to a graph states, by applying local Clifford unitaries brute-force.

	A more refined strategy relies on the algorithm from Ref.~\cite{Adcock2020mappinggraphstate}: 
	This algorithm allows to generate the set of 
	LC-equivalent graph states on a small number of qubits. 
	The algorithm results in graphs that are non-isomorphic to each other. Thus, exploring all LC-orbit requires to additionally permute the associated vertices followed by checking whether the permuted graph is LC equivalent to the original graph. 
	In practice the following strategy appears useful: one explores the LC-orbit, chooses a graph with a few number of edges, and then further optimizes the range of interactions while making sure of LC equivalence to the original graph.
	We note however that these steps are computationally intensive: 
	The number of permutations grow super-exponentially. 
	Furthermore, while the 
	exact scaling of the LC algorithm is unknown, the LC orbit might also become super-exponentially large for $n\geq~12$~\cite{Adcock2020mappinggraphstate}. 
	
	\begin{figure}[tbp]
		\centering
		\includegraphics[width=0.35\textwidth]{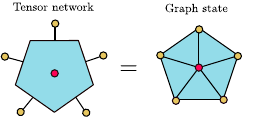}    
		\caption{\textbf{Building blocks of the tensor network.} 
			Each pentagon tensor with six indices (left) represents a six-qubit AME state, also known as perfect tensor. For our purposes we choose the graph state representation (right).}
		\label{fig:AME}
	\end{figure}
	
	For obtaining the graph $\ket{G^\text{opt}}$ (shown in Fig.~\ref{fig:MinimalGraph})
	we have limited ourselves to the following strategy:
	Exploring the LC-orbit for the graph in Fig.~\ref{fig:Minimal}b with $n=16$ 
	took our desktop computer a few seconds.  
	Then we permuted only the qubits which were part of the building blocks, and checked LC equivalence to the original graph.
	Here both the LC orbit and equivalence check were carried out with the library from Ref.~\cite{Adcock2020mappinggraphstate}. 
	Note that, while the number of its edges is minimal in the LC orbit, the interaction range of the resulting graph may be not.
	The LC orbit of a $n=22$ instance of the holographic code
	corresponding to $6$ pentagons can still be explored orbit in around $10$ minutes.
	
	However, larger instances seem to require a more heuristic strategy:
	Instead of exploring the LC-orbit, we then apply heuristically Hadamard gates to $\ket{H}$ that result in graph states [see Eq.~\eqref{eq:Holotograph}]. 
	Finally, $\pZ$ gates can always be applied to set $\omega=\textbf{0}$. Thus they do not play an important role in the optimization.
	This last strategy seems to work better for symmetric instances of the hyperbolic pentagon code. Surprisingly, it also leads to the improved graph from Fig.~\ref{fig:MinimalGraph}. 
	Appendix~\ref{app:larger} shows a larger instance resulting from contracting eleven AME states corresponding to $n=36$ qubits.
	
	\subsection{Holographic graph code}\label{sect:holographcode}
	Here we show how the holographic code can be understood as a graph code \cite{PhysRevA.65.012308,10.1007/978-3-642-20901-7_9}. 
	From the graph code and its representation as 
	graph state $\ket{G}$ [see Eq.~\eqref{eq:Holotograph}], 
	we derive the logical basis states [see Eq.~\eqref{eq:state}]. 
	We emphasize that the results in this section applies for any graph code, including $\ket{G^\text{opt}}$. 
	
	We recall that the check matrix of a graph state is described by
	$(\one | \Gamma)$. 
	Since the holographic graph state $\ket{G}$ is maximally entangled across $\MM|\PP$, we can write its adjacency matrix as
	\begin{equation}\label{eq:check0}
		\begin{pmatrix}[cc|cc]
			\one_n & 0 & \Gamma_\PP & B^\intercal\\
			0 & \one_k & B & \Gamma_\MM
		\end{pmatrix}\,,
	\end{equation}
	where $\rank(B)=k$, as shown in \cite[Section II]{Grassl}. The $n$ first columns of the $\pX$- and $\pZ$-part carry the information how the generators act on the boundary qubits and the remaining $k$ columns describe how the generators act on the logical qubits. Here $\Gamma_\MM$, $\Gamma_\PP$ and $B$ 
	represent the interactions within the bulk, 
	within the boundary, and between bulk and boundary qubits respectively.
	The corresponding edge sets are $E_\MM$, $E_\PP$, $E_{\PP|\MM}$.
	
	Similarly to $\ket{H}$ from Eq.~\eqref{eq:state}, the graph state $\ket{G}$ from Eq.~\eqref{eq:check0} can be written as
	\begin{equation}\label{eq:graphst}
		\ket{G} = \sum^1_{i_1,\dots,i_k =0}
		\ket{i_1 \dots i_k}_\MM \ot
		\ket{G_{i_1 \dots i_k}}_\PP 
		\,.
	\end{equation}
	The basis elements of the code space $\ket{G_{i_1 \dots i_k}}$ with $i_1,\dots,i_k\in \{0,1\}$ can be defined as
	\begin{equation}\label{eq:logicalbasis}
		\ket{G_{i_1 \dots i_k}}= (-1)^{\sum_{(u,v)\in \E_\MM} i_vi_u}
		\prod^{k}_{r=1} \Xbar^{i_r}_r \ket{G_{0 \dots 0}} \,,
	\end{equation}
	where
	$\ket{G_{0\dots0}}$  
	is the logical zero state and $\{\Xbar_r\}^k_{r=1}$ are the logical $\Xbar$~gates,
	\begin{equation}\label{eq:logical0X}
		\ket{G_{0 \dots 0}}  = \!\!\prod_{(u,v) \in \E_\PP} \!\! \CZ_{uv} \ket{+}^{\otimes n} \,,\,\,\, \Xbar_r =\bigotimes^{n}_{s=1} \pZ^{B_{rs}}\,.
	\end{equation}
	Eq.~\eqref{eq:logicalbasis} and \eqref{eq:logical0X} are derived in Appendix~\ref{app:logIso}. 
	
	Now, we want to find the remaining logical gates $\{\Zbar_{r}\}^k_{r=1}$  and the generators of the code space $\{g_r\}^{n-k}_{r=1}$. It is clear that the generators must commute with the logical gates since their action leaves the code space invariant.
	As they act on~$\PP$ only, we consider the boundary qubits of the check matrix Eq.~\eqref{eq:check0} and write it as~\cite[Section 4]{10.1007/978-3-642-20901-7_9}
	
	\begin{equation}\label{eq:checkphy}
		\begin{pmatrix}[c|c]
			\one_n & \Gamma_\PP \\
			0 & B 
		\end{pmatrix}
		=
		\begin{pmatrix}[cc|cc]
			\one_{n-k} & 0  & \Gamma_1 & \Gamma^\intercal_2  \\
			0 & \one_k & \Gamma_2 & \Gamma_3 \\
			0 & 0 & B_1 & B_2 
		\end{pmatrix}\,,
	\end{equation}
	such that $\rank(B_2)=k$. Here $B$ is decomposed into two matrices $B_1,B_2$ and $\Gamma_\PP$ into three matrices $\Gamma_1, \Gamma_2, \Gamma_3$. 
	
	Performing row operations on Eq.~\eqref{eq:checkphy} leads to the generators of the code subspace $\CC_G$ and the logical operators $\thickbar{Z}_{}$ and
	$\thickbar{X}_{}$. The result is given by
	\begin{widetext}
		\begin{equation}\label{eq:checkLogOpxz}
			\begin{pmatrix}[c]
				\CC_{G} \\ 
				\thickbar{Z}_{}  \\
				\thickbar{X}_{} 
			\end{pmatrix}=
			\begin{pmatrix}[cc|cc]
				\one_{n-k} & -B^\intercal_1(B_2^\intercal)^{-1}& [\Gamma_{1}-B^\intercal_1(B_2^\intercal)^{-1}\Gamma_2] &  [\Gamma^\intercal_{2}-B^\intercal_1(B_2^\intercal)^{-1}\Gamma_3] \\
				0 & (B^\intercal_2)^{-1}& (B^\intercal_2)^{-1}\Gamma_2 & (B^\intercal_2)^{-1}\Gamma_3 \\ 
				0 & 0 & B_1 & B_2 \\ 
			\end{pmatrix}\,.
		\end{equation}
	\end{widetext}
	The proof of this result can be found in Appendix~\ref{app:OpCode}.
	
	It is important to remark that with the row operations needed to obtain Eq.~\eqref{eq:checkLogOpxz}, 
	the phase vector can also change. This will also introduce additional signs (see Appendix~\ref{app:OpCode} for details) to the logical $\Zbar$~gates and the generators of the code space associated to $\ket{G}$. 
	
	\subsection{Encoding} 
	We now describe how an arbitrary state is encoded into the holographic code. 
	The layout of Fig.~\ref{fig:holo} contains
	$k$ bulk and $n$ boundary qubits.
	Given the logical gates and the generators for the code subspace [c.f. Eq.~\eqref{eq:checkLogOpxz}], we provide a recipe to encode an arbitrary $k$-qubit bulk state into the boundary.
	This recipe is a modification of the encoding method from Ref.~\cite{10.1007/978-3-642-20901-7_9} and
	can be found in more details in Appendix~\ref{app:Encoding}.
	
	Define the controlled logical gates acting on bulk qubit $j$ in $\MM$ as 
	\begin{equation}
		\begin{aligned}
			\CXbar_j = &
			\ketbra{0}{0}_{j} \otimes \one +\ketbra{1}{1}_{j}  \otimes \Xbar \,,\\
			\CZbar_j = &
			\ketbra{0}{0}_{j} \otimes \one +\ketbra{1}{1}_{j}  \otimes \Zbar
			\,,
		\end{aligned}
	\end{equation}
	with $\one$, $\Xbar$, and $\Zbar$ acting on the boundary space $\PP$.

	A bulk state $\ket{\phi_{\inc}}$ encodes into a boundary state $\ket{\phi_{\enc}}$ via
	\begin{equation}
		\label{eq:gateUG}
		\ket{+}^{\otimes k}_{\MM} \ot \ket{\phi_{\enc}}_{\PP} =
		U_G 
		\big(
		\ket{\phi_{\inc}}_{\MM} \ot \ket{+}_{\PP}^{\otimes n}  
		\big)\,,
	\end{equation}
	where $U_G = U_3  U_2  U_1$ with
	\begin{equation}\label{eq:gates}
		\begin{aligned}
			U_1  & = \!\!\!\!\!\!\!\! \prod_{
				\substack{
					(u,v)  \in \{\E_\MM, \E_\PP\} 
				} 
			} \!\!\!\!\!\! \CZ_{uv}\,, \quad U_2 =\prod^{k}_{j=1} \CXbar_{j}\,, \\
			U_3 &= \prod^{k}_{j=1} {\CZbar}_{j} 
			\Big( 
			\Had^{\otimes k}_{\MM} \ot \one^{\ot n}_{\PP} 
			\Big) \,.
		\end{aligned}
	\end{equation}
	After a successful encoding
	the bulk degrees of freedom are left in the product state $\ket{+}^{\ot k}$, as illustrated in Fig.~\ref{fig:enc}. 
	
	The unitary $U_G$ is decomposed into three unitaries.
	The gates in $U_1$ take into account the interactions among boundary qubits and bulk qubits separately [$\Gamma_\PP$ and $\Gamma_\MM$ from Eq.~\eqref{eq:check0}]. They prepare the logical zero state and the phases of the logical states respectively. 
	The gates in $U_2$ take into account the interactions between bulk and boundary qubits [$B$ from Eq.~\eqref{eq:check0}], entangling the bulk computational basis with the logical boundary states. 
	Finally, $U_3$ is responsible for disentangling both systems with the information from the bulk transmitted to the boundary.
	Note that the conditional gates $\CXbar_j$ and $\CZbar_j$ act between the qubit $j$ in $\MM$ and the boundary $\PP$, 
	a gate $\CZ_{uv}$ acts on two qubits $u,v$ that are both in either $\MM$ or $\PP$ only.
	
	Since the logical gates are composed by a tensor product of Pauli gates, we can decompose $\CXbar_j$ and $\CZbar_j$ into a product of $\text{CX}_{uv}$ and $\CZ_{uv}$ gates. 
	We can see that with an example. Define a controlled gate composed by Pauli gates:
	\begin{equation}
		\ketbra{0}{0}_1\otimes \one^{\otimes 4}_{2345} + \ketbra{1}{1}_1 \otimes \big(\pX_2 \pZ_3\pY_4 \pY_5\big)\,.
	\end{equation}
	One can decompose it as
	\begin{equation}
		\CZ_{14}\big(\text{CX}_{12}\text{CX}_{14}\text{CX}_{15}\big)\big(\CZ_{13}\CZ_{15}\big)\,.
	\end{equation}
	This decomposition is useful since $\text{CX}$ and $\CZ$ gates are realizable in many experimental platforms.

	\begin{figure}[tbp]
		\centering
		\includegraphics[width=0.45\textwidth]{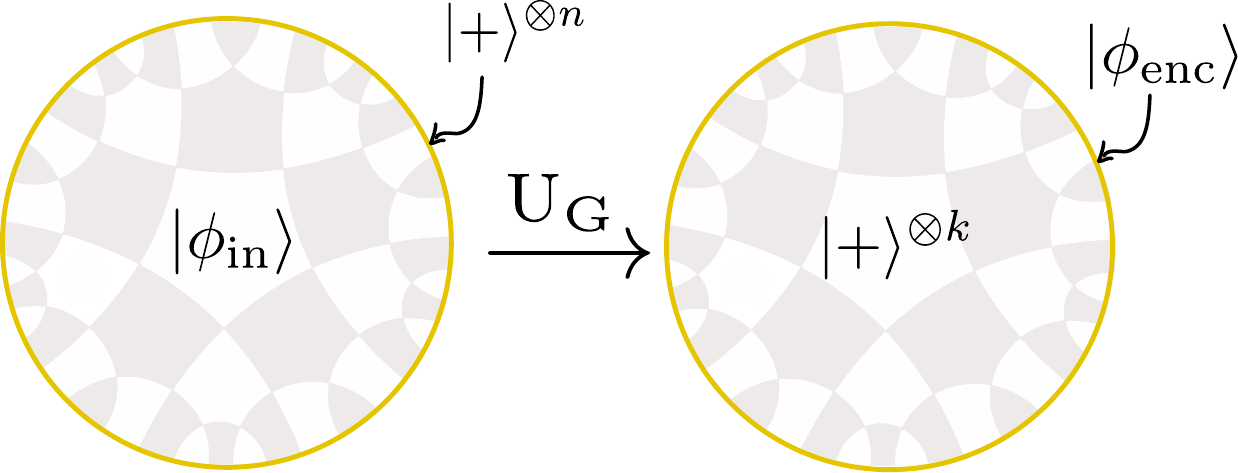} \caption{
			\label{fig:enc}
			\textbf{Encoding scheme.} 
			The state $\ket{\phi_{\inc}}$ to be encoded is localized in the bulk, while the boundary qubits are in the product state $\ket{+}^{\otimes k}$. 
			After performing the encoding unitary $\text{U}_G$ the information
			is mapped to the boundary state $\ket{\phi_{enc}}$ and with the bulk  
			in a product state $\ket{+}^{\otimes k}$.
		}
	\end{figure}

	\subsection{Partial decoding}
	The decoding abilities of the hyperbolic pentagon code are related to its geometry, which is induced by how the AME states are arranged in the tensor network. 
	Their contraction yields the holographic state $\ket{H}$, 
	which represents an isometry $T_{\op{H}}$ that encodes $k$ bulk qubits into $n$ boundary qubits through Eq.~\eqref{eq:isom}.
	
	To see how 
	a bulk region 
	can be recovered from its nearby boundary,
	we partition the bulk into two complementary regions $\regE$ and $\regF$ along a cut $\gamma$ with associated boundary regions $\regdE$ and $\regdF$, shown in Fig.~\ref{fig:holo}a.
	To each tensor network leg crossed by the cut $\gamma$ we associate a qubit. These qubits form the Hilbert space $\mathcal{H}_{\gamma}=(\mathbb{C}^2)^{\otimes |\gamma|}$.
	If there exists an isometry from $\regE \cup \gamma$ to $\regdE$, then the quantum information stored in $\regE$ can be recovered from $\regdE$. 
	A way to check whether such isometry exists is through the method of tensor pushing~\cite[Section 5.3]{Pastawski_2015}. Such a partial isometry can be written as,
	\begin{align}
		T_{\op{h}}
		&=
		\sum_{\bmR{i}\in \mathbb{Z}^{|\regE|}_2}\sum_{\bmR{j}\in \mathbb{Z}^{|\gamma|}_2} 
		\ket{h_{\bmR{ij}}}_{\regdE}
		\bra{\bmR{i}}_E 
		\ot 
		\bra{\bmR{j}}_\gamma \nonumber\\
		&=\sum_{\bmR{\ell}\in \mathbb{Z}^{|\regdE|}_2}
		\sum_{\bmR{i}\in \mathbb{Z}^{|\regE|}_2}
		\sum_{\bmR{j}\in \mathbb{Z}^{|\gamma|}_2}  
		h_{\bmR{\ell ij}}\ket{\bmR{\ell}}_{\regdE}
		\bra{\bmR{i}}_E 
		\ot 
		\bra{\bmR{j}}_\gamma\,,
	\end{align}
	where
	$\ket{h_{\bmR{ij}}}_{\regdE}$, 
	forms an orthonormal basis of $\regdE$.
	
	Given a bulk state that was encoded through the isometry $T_{\op{H}}$, one can apply then a partial isometry~$T^{\dagger}_{\op{h}}$ on $\regdE$ to recover the bulk region~$\regE$. 
	This follows from the fact that $T_{\op{H}}$ can be decomposed in terms of the elements $h_{\bmR{\ell ij}}$ from $T_{\op{h}}$ as,
	\begin{equation}
		T_{\op{H}}
		=
		\sum_{\bmR{\ell}\in \mathbb{Z}^{|\regdE|}_2}
		\sum_{\bmR{i}\in \mathbb{Z}^{|\regE|}_2}
		\sum_{\bmR{j}\in \mathbb{Z}^{|\gamma|}_2}
		\Big(
		h_{\bmR{\ell ij}} 
		\ket{\bmR{\ell}}_{\regdE}\bra{\bmR{i}}_{\regE} \ot
		{R_{\bmR{j}}}
		\Big)\,,
	\end{equation}
	where $R_{\bmR{j}}$ is a tensor mapping from  $\regF$ to $\regdF$. 
	This decomposition emerges from the tensor network illustrated in Fig.~\ref{fig:holo}a.
	The isometry $T_{\op{h}}$ is constructed by the building blocks geometrically situated in $E$,  
	which in turn are contracted by $|\gamma|$ indices to the rest of the building blocks situated in $F$.
	
	One sees that 
	$T_{\op{H}}$ followed by $T^{\dagger}_{\op{h}}$ acts as identity on $\regE$,
	\begin{align}\label{eq:partialiso}
		&\phantom{=}(T^\dagger_h \ot \one_{\regdF})T_{\op{H}}  \nonumber \\
		&=
		\sum_{\bmR{i},\bmR{i'}\in \mathbb{Z}^{|\regE|}_2}
		\sum_{\bmR{j},\bmR{j'}\in \mathbb{Z}^{|\gamma|}_2}
		\Big(\sum_{\bmR{\ell}\in \mathbb{Z}^{|\regdE|}_2}
		h^*_{\bmR{\ell i'j'}} h_{\bmR{\ell ij}}\Big) 
		\ketbra{\bmR{i'}}{\bmR{i}} \ot \ket{\bmR{j'}}_\gamma
		\ot
		{R_{\bmR{j}}} \nonumber \\ 
		&= \one_{\regE} \ot \sum_{\bmR{j}\in \mathbb{Z}^{|\gamma|}_2}\ket{\bmR{j}}_\gamma\ot {R_{\bmR{j}}}
		\,,
	\end{align}
	where we used that $\sum_{\bmR{\ell}\in \mathbb{Z}^{|\regdE|}_2} h^*_{\bmR{\ell i'j'}} h_{\bmR{\ell ij}}=\delta_{\bmR{ii'}} \delta_{\bmR{jj'}}$ because the elements $\ket{h_{\bmR{ij}}}$ form an orthonormal basis. 
	Hence, $T_{\op{h}}$ recovers the bulk information of $\regE$ from its nearby boundary $\regdE$.

	The isometry $T_{\op{h}}$ can be represented as a quantum state $\ket{h} \in \HH_{\regE} \otimes \HH_{\gamma} \otimes \HH_{\regdE}$. This state is constructed by contracting six-qubit AME states from the regions $\regE$ and $\regdE$ only and it can be converted to a graph state $\ket{g}$ via local Clifford operations $W$, such that $\ket{g}=W\ket{h}$. 
	Eq.~\eqref{eq:gateUG} and \eqref{eq:gates}
	allow to obtain
	the encoding and decoding gates $U_{\op{g}}$ corresponding to $\ket{g}$.
	As done with the full code~$\ket{G}$, it is practical to optimize this "partial" graph code~$\ket{g}$ with respect to the range and number of gates.
	
	Recall that $\ket{G}=\Vop\ket{H}$ and $\ket{g}=\Wop\ket{h}$, 
	where $\Wop,\Vop$ are composed of local Clifford gates. Then, the decoding gate $U_{\op{g}}$ has to be corrected with corresponding local Clifford gates (see Appendix~\ref{app:Pdec}), 
	\begin{equation}\label{eq:recgates2}
		{\widetilde{U}^\dagger_{\op{h}}}
		=
		(\Vop^\dagger_{\regE}\Wop_{\regE} \ot \one_{\gamma\regdE})
		\,
		U^\dagger_{\op{g}}
		\, 
		(\one_{\regE\gamma}\ot \Wop_{\regdE}\Vop^\dagger_{\regdE})\,,
	\end{equation}
	where $
	{V_\regE},
	{V_{\regdE}},
	{W_\regE},
	{W_{\regdE}}$ 
	contain the local Clifford gates of $V$ and $W$ having support on $\regE$ and $\regdE$ respectively.
	Here, ${\tilde{U}_{\op{h}}}$ is the unitary operator that performs a partial decoding
	of a boundary state that was encoded via $\ket{G}$.
	
	Recall that the bulk decomposes as 
	$\MM = \regE \cup \regF$ 
	and the boundary as
	$\PP = \regdE \cup \regdF$.
	Then a state $\ket{\phi_{\text{enc}}}$ that was encoded through $\ket{G}$,
	\begin{equation}
		\ket{+}^{\otimes k}_{\MM} \ot \ket{\phi_{\enc}}_{\PP} =
		U_{\op{G}} 
		\big(
		\ket{\phi_{\inc}}_{\MM} \ot \ket{+}_{\PP}^{\otimes n}  
		\big)\,,
	\end{equation}
	can be partially decoded by ${\tilde{U}^{\dagger}_{\op{h}}}$ 
	\begin{align}
		&\ket{\psi_\text{dec}} =
		\big(
		{\tilde{U}^{\dagger}_{\op{h}}}
		\otimes 
		\one_{\regdF} 
		\big) 
		\big(
		\ket{+}^{\otimes (
			|\regE|+|\gamma|)}_{\regE\gamma} \otimes \ket{\phi_{\enc}}_{\PP}
		\big)
		\,.
	\end{align}
	In particular, one can check that for this partially decoded state
	\begin{equation}\label{eq:partialdecoding}
		\tr_{\regF}  \big(\ketbra{\phi_{\inc}}{\phi_{\inc}}\big)
		=
		\tr_{\gamma\PP}  \big(\ketbra{\psi_{\text{dec}}}{\psi_{\text{dec}}}\big)
	\end{equation}
	holds. Consequently, all quantum information contained in $\regE$ can be recovered from its nearby boundary $\regdE$, demonstrating holographic properties. Which other recovery regions are possible is studied in Ref.~\cite[Section 5.3]{Pastawski_2015}. 
	
	\begin{figure}[tbp]
		\centering
		\includegraphics[width=0.33\textwidth]
		{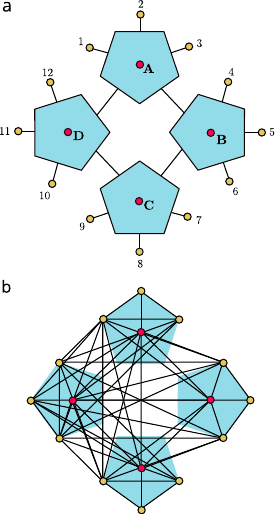}
		\caption{\textbf{Four pentagons.} 
			Contracting four perfect tensors (upper) leads to a small instance of the holographic code. 
			The contraction is mapped to a graph state (lower).
			This state can be thought of as an isometry
			that encodes the bulk qubits (red) 
			into the boundary qubits (golden).
			Optimizing the range and the number of edges of the graph in Fig.~\ref{fig:holo}b over local Clifford operations leads to the graph in Fig.~\ref{fig:MinimalGraph}.
			\label{fig:Minimal}
		}
	\end{figure}

	\begin{figure}[tbp]
		\centering
		\includegraphics[width=0.48\textwidth]{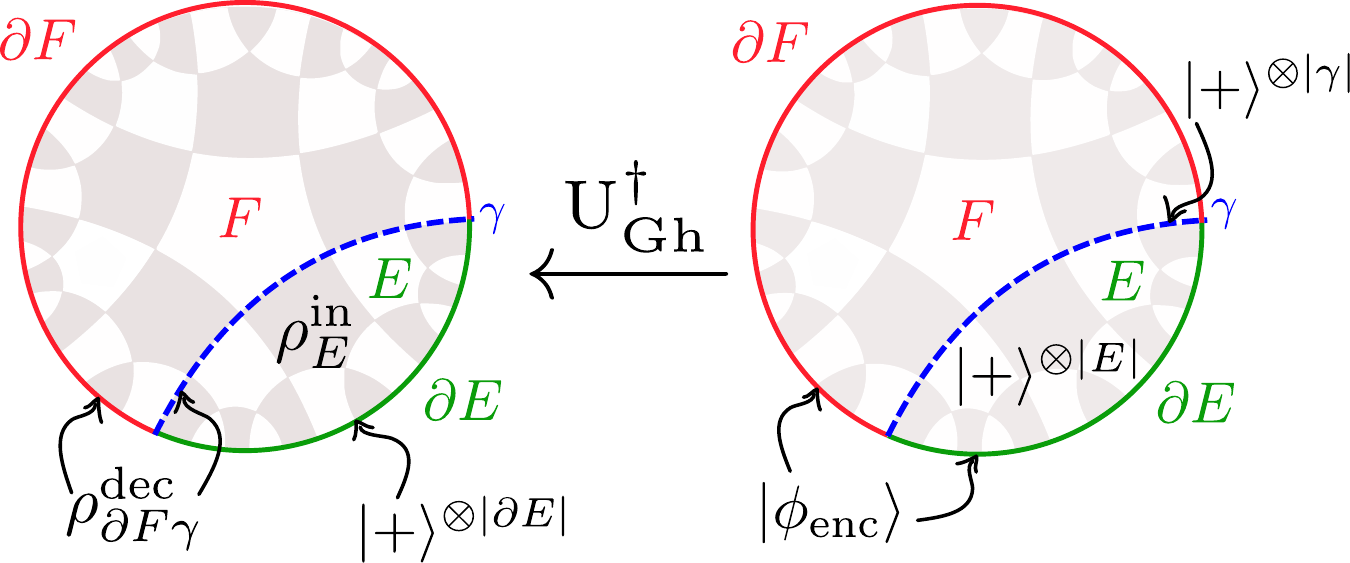}
		\caption{\textbf{Partial decoding scheme.}  
			The encoded state $\ket{\phi_{\enc}}$ is localized on the boundary, and the bulk qubits on $\regE$ and the qubits on the cut $\gamma$ are in the product states $\ket{+}^{\ot|E|}$ and $\ket{+}^{\ot |\gamma|}$ respectively.
			The partial decoding unitary ${\widetilde{U}^\dagger_{\op{h}}}$ transmits the information from the boundary region $\regdE$ to $\regE$. 
		}
		\label{fig:Dec_1}
	\end{figure}
	
	\subsection{A holographic model on 12 qubits} \label{sect:Engholo}
	
	Here we consider a small instance of the holographic code on $12$ qubits that exhibits holographic properties and
	describe the experimental preparation of its logical states, 
	its encoding, as well as the decoding procedures. 
	We also show how to recover a bulk region 
	from its nearby boundary.
	The methodology relies on the formulation of the hyperbolic pentagon (HaPPY) code~\cite{Pastawski_2015} as a stabilizer graph code derived in the previous sections.
	
	This toy model consists of four connected pentagons, 
	each representing a six-qubit absolutely maximally entangled state (AME), also known as perfect tensor, as illustrated in Fig.~\ref{fig:Minimal}a. 
	The toy model involves twelve boundary qubits, labeled by $1$ to $12$, and four bulk degrees of freedom labeled by $\text{A}, \text{B}, \text{C}, \text{D}$,
	thus requiring $16$ qubits in total.

	The building block of the hyperbolic pentagon code is the 
	six-qubit AME state, for which a highly symmetric graph state representation exists (c.f. Fig.~\ref{fig:AME})  \cite{https://doi.org/10.48550/arxiv.1306.2879}.
	The contraction of four such states yields the stabilizer state~$\ket{H}$ which carries both information about the code subspace as well as about the encoding.
	The resulting $\ket{H}$ can be transformed to a graph state~$\ket{G}$ by the application of a single layer of Hadamard gates and a subsequent layer of $\pZ$ gates, as shown in Appendix~\ref{app:graphH}.
	As $\ket{G}$ requires many long-ranges $\CZ$ gates,
	it is useful to choose a local Clifford equivalent graph state
	that requires gates 
	of shortest possible range. 
	Here we choose the state $\ket{G^\text{opt}}$ that is shown in Fig.~\ref{fig:MinimalGraph}, which can be found by exploring the local Clifford orbit.
	This graph has no edges that cross the center and is rotational invariant. 
	
	From the graph representation of the state $\ket{G^{\text{opt}}}$ as illustrated in Fig.~\ref{fig:MinimalGraph}, 
	the logical zero state and the logical $\Xbar$~gates are extracted as follows: 
	remove the red vertices (the bulk qubits) and their incident edges one  obtain
	the logical state $\ket{G^{\text{opt}}_{0000}}$,
	corresponding to Eq.~\eqref{eq:logical0X}. This logical state
	is illustrated in Fig.~\ref{fig:LogicalState}. 
	The logical $\Xbar$~gates,
	given through Eq.~\eqref{eq:logical0X},
	are determined by the red
	edges $B_{rs}$ that connect the bulk with the boundary
	and read
	\begin{equation}\label{eq:LogicalX}
		\begin{matrix}[cc ccc ccc ccc ccc]
			&&\text{\footnotesize1}&\text{\footnotesize2}&\text{\footnotesize3}&\;
			\text{\footnotesize4}&\text{\footnotesize5}&\text{\footnotesize6}&\;
			\text{\footnotesize7}&\text{\footnotesize8}&\text{\footnotesize9}&\; \text{\footnotesize10}&\!\text{\footnotesize11}&\!\!\!\!\text{\footnotesize12}\\
			\thickbar{X}_{\op{A}} &=& \textcolor{red}{\pZ}&\one& \textcolor{red}{\pZ}&\,
			\textcolor{red}{\pZ}& \textcolor{red}{\pZ}& \textcolor{red}{\pZ}&\, \one&\one&\one&\,
			\one\!&\one&\!\one \,, \\
			\thickbar{X}_{\op{B}} &=&
			\one&\one&\one&\,
			\textcolor{red}{\pZ}&\one&
			\textcolor{red}{\pZ}&\,
			\textcolor{red}{\pZ}& \textcolor{red}{\pZ}& \textcolor{red}{\pZ}&\,
			\one\!&\one&\!\one \,, \\
			{{\thickbar{X}_{\op{C}}}} &=&
			\one&\one&\one&\,
			\one&\one&\one&\,
			\textcolor{red}{\pZ}&\one&
			\textcolor{red}{\pZ}&\,
			\textcolor{red}{\pZ}& \textcolor{red}{\pZ}& \textcolor{red}{\pZ} \,, \\
			\thickbar{X}_{\op{D}} &=&
			\textcolor{red}{\pZ}\!&\textcolor{red}{\pZ}& \!\textcolor{red}{\pZ}&\,
			\one&\one&\one&\,
			\one&\one&\one&\,
			\textcolor{red}{\pZ}\!&\one&
			\!\textcolor{red}{\pZ} \,.
		\end{matrix}
	\end{equation}
	
	While the logical $\Xbar$~gates can be extracted directly from the graph in Fig.~\ref{fig:MinimalGraph}, the logical $\Zbar$ operator do not seem to have such simple graphical interpretation. However, they can be found in Eq.~\eqref{eq:checkLogOpxz} and read 

	\begin{equation}\label{eq:LogicalZ}
		\begin{matrix}[cc ccc ccc ccc ccc]
			&&\text{\footnotesize1}&\text{\footnotesize2}&\text{\footnotesize3}&\;
			\text{\footnotesize4}&\text{\footnotesize5}&\text{\footnotesize6}&\;
			\text{\footnotesize7}&\text{\footnotesize8}&\text{\footnotesize9}&\; \text{\footnotesize10}&\!\text{\footnotesize11}&\!\!\!\!\text{\footnotesize12}\\
			\thickbar{Z}_{\op{A}}&=&
			\one&\pZ&\pZ&\;
			\pZ&\pX&\one&\;
			\pZ&\pZ&\pZ&\;
			\one\!&\one&\!\one \,,\\
			\thickbar{Z}_{\op{B}}&=&
			\one&\one&\one&\;
			\one&\pZ&\pZ&\;
			\pZ&\pX&\one&\, 
			\pZ\!&\pZ&\!\pZ\,,\\
			\thickbar{Z}_{\op{C}}&=&
			\pZ&\pZ&\pZ&\;
			\one&\one&\one&\;
			\one\!&\pZ&\!\pZ&\;
			\pZ&\pX&\one\,,\\ 
			\thickbar{Z}_{\op{D}}&=&
			\pZ&\pX&\one&\;
			\pZ&\pZ&\pZ&\;
			\one&\one&\one&\,
			\one\!&\pZ&\!\pZ\,.
		\end{matrix}
	\end{equation}
	Note that the logical gates in Eq.~\eqref{eq:LogicalX} and Eq.~\eqref{eq:LogicalZ} preserve the same rotational symmetry as the graph (Fig.~\ref{fig:MinimalGraph}) from which they are extracted. 
	
	The logical operators can be further simplified when multiplied by the code subspace generators. However, we lose the graphical interpretation of the logical $\Xbar$ gates.
	Similarly, the code subspace generators can be optimized by multiplying each other. This can reduce the number of gates used in this section even more (see Appendix~\ref{app:fidelity}).
	
	\begin{figure}[tbp]
		\centering
		\includegraphics[width=0.36\textwidth]{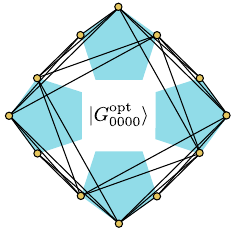}
		\caption{\label{fig:LogicalState}\textbf{Logical state.} 
			The logical state $\ket{G^{\text{opt}}_{0000}}$
			of the $12$ qubit hyperbolic pentagon code as derived from the 
			graph state in Fig.~\ref{fig:MinimalGraph}.}
	\end{figure}
	
	\subsection{Preparing the logical states}\label{sect:preparingLog0}
	We describe
	how the logical states
	can be prepared experimentally. 
	Throughout we will use 
	the labeling from Fig.~\ref{fig:Minimal}a.
	One starts by preparing the logical zero state $\ket{G^{\text{opt}}_{0000}}$ that is illustrated in Fig.~\ref{fig:LogicalState} and whose formula is given by Eq.~\eqref{eq:logical0X}. 
	Prepare  $\ket{+}^{\otimes 12}$ and apply controlled-Z gates between qubits $u,v \in \E_\PP$,
	\begin{align}
		\ket{G^\text{opt}_{0000}} = \prod_{(u,v) \in \E_\PP} \CZ_{uv}  \ket{+}^{\otimes 12} \label{eq:Gopt}\,.
	\end{align}
	Here ${\E_\PP}$ consists of the $28$ boundary-to-boundary edges from
	Fig.~\ref{fig:LogicalState}, 
	\begin{align}
		\label{eq:EPgates}
		\E_\PP=
		\big\{&(2, 1), (2, 4), (2, 5), (2, 6), (3, 4), (3, 5), (3, 6), \nonumber\\
		&(5, 4), (5, 7), (5, 8), (5, 9), (6, 7), (6, 8), (6, 9), \nonumber\\
		&(8, 7), (8, 10), (8, 11), (8, 12),\nonumber\\
		&(9, 10), (9, 11), (9, 12), \nonumber\\
		&(11, 10), (11, 1), (11, 2), (11, 3), \nonumber\\
		&(12, 1), (12, 2), (12, 3) \big\} \,.
	\end{align}
	With the logical zero state prepared, one can now obtain the remaining logical states by applying logical $\Xbar$~gates stated in Eq.~\eqref{eq:logical0X}. 
	
	\subsection{Encoding circuit}\label{sect:Encircuit}
	In the following paragraphs we describe how to encode an 
	arbitrary four-qubit bulk state 
	into twelve boundary qubits.
	Eq.~\eqref{eq:gateUG} and \eqref{eq:gates}
	describe the associated encoding procedure for $\ket{G^{\text{opt}}}$.
	This yields the encoding unitary ${\op{H}}_G$, which can
	be decomposed into three parts.
	
	\begin{itemize}
		\item[(1)] Unitary $U_1$ prepares the logical zero state and introduces real phases to the logical states as shown in Eq.~\eqref{eq:logicalbasis}. 
		\item[(2)]  Unitary $U_2$  entangles the logical states of the boundary with the computational basis of the bulk.
		\item[(3)]  Unitary $U_3$ decouples the bulk from the boundary, yielding the encoded state on the boundary.
	\end{itemize}
	
	Eq.~\eqref{eq:gates} shows the general form of these unitaries, in particular they can be decomposed in terms of $\CX$ and $\CZ$ gates.
	For the graph in Fig.~\ref{fig:MinimalGraph} they simplify as follows:
	\begin{align}\label{eq:U1}
		U_1 = \!\!\!\!\!\!\!\! \prod_{
			\substack{
				(u,v)  \in \E_\MM, \E_\PP 
			} 
		} \!\!\!\!\!\! \CZ_{uv} = \!\! \prod_{(u,v) \in \E_{\PP}} \!\!\! \CZ_{uv} \,,
	\end{align}
	because our graph does not have interaction between bulk qubits $\E_\MM=0$. The set $\E_{\PP}$ is given in Eq.~\eqref{eq:EPgates}.
	
	Then, $U_2$ can be written as
	\begin{align}\label{eq:U2}
		U_2 =\!\!\!\!\!\!\!\! \prod_{j\in\{\text{A,B,C,D}\}} \!\!\!\! \CXbar_{j}=\!\!\!\!\!\!\!\!   \prod_{(u,v) \in \E_{\PP|\MM}} \!\!\!\!\!\! \CZ_{uv} \,,
	\end{align}
	by decomposing the logical gates 
	$\CXbar$ 
	into 
	two-qubit $\CZ$ gates according to Eq.~\eqref{eq:LogicalX}. 
	Here the set $\E_{\PP|\MM}$ is given by
	\begin{equation}\label{eq:EBgates}\centering
		\begin{aligned}
			\E_{\PP|\MM}=
			\big\{&(\text{A}, 1), (\text{A}, 3), (\text{A}, 4), (\text{A}, 5), (\text{A}, 6) \\
			&(\text{B}, 4), (\text{B}, 6), (\text{B}, 7), (\text{B}, 8), (\text{B}, 9) \\
			&(\text{C}, 7), (\text{C}, 9), (\text{C}, 10), (\text{C}, 11), (\text{C}, 12)\\
			&(\text{D}, 10), (\text{D}, 12), (\text{D}, 1), (\text{D}, 2), (\text{D}, 3) \big\} \,,
		\end{aligned}
	\end{equation}
	
	Finally, $U_3$ reads
	\begin{align}\label{eq:U3}
		U_3 &= \!\!\!\!\!\!\!\! \prod_{j\in\{\text{A,B,C,D}\}} \!\!\!\! {\CZbar}_{j} 
		\Big( 
		\Had^{\otimes 4}\otimes \one_{12} 
		\Big) \\
		&= \!\!\!\!\!\!\!\! \prod_{j\in\{\text{A,B,C,D}\}} \!\! \Big( \prod_{w\in W_j} \! \text{CX}_{jw} \! \prod_{v\in V_j} \! \CZ_{jv}\Big) \Big(\Had^{\otimes 4}\otimes \one_{12}\Big)\,, \nonumber
	\end{align}
	where the logical $\CZbar$ gates are decomposed into $\CX$ and $\CZ$ gates according to Eq.~\eqref{eq:LogicalZ}. Here the sets $V_{j}$ and $W_{j}$ are given by
	\begin{equation}
		\begin{aligned}
			W_{\text{A}}= &\{5\}\,, \quad  &V_{\text{A}}= &\{2,3,4,7,8,9\}\,, \\
			W_{\text{B}}= &\{8\}\,, \quad &V_{\text{B}}= &\{5,6,7,10,11,12\}\,, \\
			W_{\text{C}}= &\{11\}\,, \quad &V_{\text{C}}= &\{1,2,3,8,9,10\}\,, \\
			W_{\text{D}}= &\{2\}\,,  \quad &V_{\text{D}}= &\{1,4,5,6,11,12\}\,.
		\end{aligned}
	\end{equation}
	
	Given the three unitaries written as explicit quantum gates, we can proceed to encode a bulk state. 
	Expand a general bulk state as
	\begin{equation}
		\ket{\phi_{\inc}}= \!\! \sum^1_{a,b,c,d=0} \!\!\!\! c_{abcd}\ket{abcd}\,,
	\end{equation}
	and prepare the 12 boundary qubits in $\ket{+}^{\ot 12}$.
	The application of $U_1$ [c.f. Eq.~\eqref{eq:U1}]  prepares the logical zero state $\ket{G^{\text{opt}}_{0000}}$ from Eq.~\eqref{eq:Gopt} on the boundary,
	\begin{align}
		U_1\big(\ket{\phi_{\inc}}_{\MM} \otimes \ket{+}_{\PP}^{\otimes 12}\big)
		=&\ket{\phi_{\inc}}_{\MM}\otimes\ket{G^{\text{opt}}_{0000}}_{\PP} \,.
	\end{align}
	The subsequent application of $U_2$ [c.f. Eq.~\eqref{eq:U2}] entangles the bulk with the boundary,
	\begin{align}
		\ket{\psi}=&
		U_2 \big(\ket{\phi_{\inc}}_{\MM}\otimes\ket{G^{\text{opt}}_{0000}}_{\PP}\big)
		\nonumber\\ 
		=&\!\!\!\!\!\sum^1_{a,b,c,d=0} \!\!\!\!\! c_{abcd}\ket{abcd}_{\MM}\otimes\ket{G^\text{opt}_{abcd}}_{\PP}\,.
	\end{align}
	Finally, the gate $U_3$ [see Eq.~\eqref{eq:U3}] decouples the bulk from the boundary,
	\begin{align}
		U_3 \ket{\psi} =\ket{+}^{\ot 4}_{\MM}\otimes \!\!\!\!\! \sum^1_{a,b,c,d=0} \!\!\!\!\! c_{abcd} \ket{G^\text{opt}_{abcd}}_{\PP}. 
	\end{align}
	The encoded state is then
	\begin{align}
		\ket{\phi_{\enc}}= \sum^1_{a,b,c,d=0} c_{abcd} \ket{G^{\text{opt}}_{abcd}}\,,
	\end{align}
	with the computational basis $\ket{abcd}$ mapped to the logical basis $\ket{G^{\text{opt}}_{abcd}}$ .
	
	\begin{figure*}[ht]
		\centering
		\includegraphics[width=0.9\textwidth]{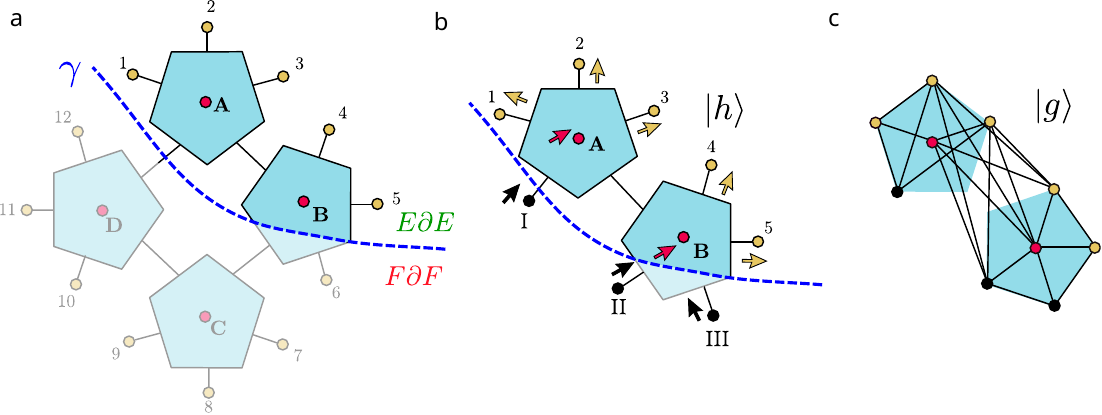}
		\caption{
			\label{fig:rec}
			\textbf{Decoding scheme for $12$ qubit hyperbolic pentagon code.} A cut for which an isometry from the red qubits of $\regE$ to the golden qubits of $\regdE$ exist (Fig.~\ref{fig:rec}a). This isometry is illustrated in Fig.~\ref{fig:rec}b. 
			After that and applying a set of Hadamards, we obtain the graph code Fig.~\ref{fig:rec}c. This code can be used to recover partially the information of the original encoding in Fig.~\ref{fig:rec}a.}
	\end{figure*}

	\subsection{Partial decoding circuit}\label{sect:Pdecodingcircuit}
	The graph code (Fig.~\ref{fig:MinimalGraph}) gives information on how to perform the encoding.
	However, it provides little intuition on realizing a partial decoding operation, i.e., recovering a part of the bulk from its nearby boundary. 
	The geometry of the tensor network indicates what partial recovery processes are possible in general as discussed in Ref.~\cite[Section 5.3]{Pastawski_2015}.
	
	For our toy model, Fig.~\ref{fig:rec} illustrates how to perform a partial recovery operation for a specific choice of bulk and boundary regions. 
	Consider the specific cut $\gamma$ from Fig.~\ref{fig:rec}a which separates two regions: $\regdE \cup \regE$ and $\regdF \cup\regF$, where $\regE \cup \regF$ are bulk qubits (red) and $\regdE \cup\regdF$ boundary qubits (gold). 
	Further, we define the black qubits labeled by I, II and III as those associated to each tensor network leg crossed by the cut $\gamma$.
	The cut is placed in such a way that the two contracted AME states in Fig.~\ref{fig:rec}b act as an isometry $T_{\op{h}}$ from $\regE \cup \gamma$ to $\regdE$. 
	As shown in Eq.~\eqref{eq:partialiso}, 
	one can use $T^{\dagger}_{\op{h}}$ 
	to recover the bulk qubits from $\regE$ 
	by only reading the boundary $\regdE$.
	
	The isometry $T_{\op{h}}$ can be represented as a quantum state $\ket{h}$ obtained by a contraction of two AME states in Fig.~\ref{fig:rec}b.
	This state $\ket{h}$ can also be mapped to a graph state by applying local Clifford operations.
	Again, this graph can be optimized with respect to the number of
	edges and locality leading to the state $\ket{g}$  illustrated in Fig.~\ref{fig:rec}c.
	Appendix~\ref{app:Pdec} shows how to correct the  $U^{\dagger}_g$ for this local Clifford optimized code.
	
	The decoding procedure contains then the following steps:
	\begin{itemize}
		\item[1.] Apply the corresponding local Clifford before $U^{\dagger}_g$.
		
		\item[2.] Apply the same procedure as described in \hyperref[sect:Encircuit]{\textit{Encoding circuit}}, but for the graph $\ket{g}$ and in a reverse order since $U^{\dagger}_g=U_3^\dagger U_2^\dagger U_1^\dagger$. 
		\item[3.] Apply the corresponding local Clifford after $U^{\dagger}_g$.
	\end{itemize}
	
	For the graphs $\ket{G^{\text{opt}}}$ and $\ket{g}$ illustrated in Fig.~\ref{fig:MinimalGraph} and Fig.~\ref{fig:rec}c respectively, the unitary gate $U^{\dagger}_{\op{g}}$ is modified and given by ${\widetilde{U}^\dagger_{\op{h}}}={Z_BU^\dagger_{\op{g}}}$ (see the end of Appendix~\ref{app:Pdec}). First, $U^{\dagger}_3$ can be written as
	\begin{align} \label{eq:U3exp}
		U^{\dagger}_3&=\Big(H^{\otimes 5}_{\regE\gamma}\otimes \one^{\otimes 12}_{\PP} 
		\Big) \!\!\! \prod_{j\in \{\text{A,B},\text{I,II,III}\}} \!\!\!\!\! {\CZbar}_{j} \nonumber\\
		&=\Big({H}^{\otimes 5}_{\regE\gamma}\otimes \one^{\otimes 12}_{\PP} 
		\Big)  \\ & \cdot \prod_{j\in\{\text{A,B},\text{I,II,III}\}} \!\!  \Big( \! \prod_{\omega\in \Omega_j} \CZ_{j\omega} \! \prod_{w\in W_j}  \text{CX}_{jw} \! \prod_{v\in V_j} \CZ_{jv} \! \Big) \,, \nonumber
	\end{align}
	where the gates $\CZbar$ decompose into $\text{CX}$ and $\CZ$ according to
	\begin{equation}\label{eq:LogicalXZ}
		\!\!\begin{matrix}[cc ccccc c cc ccccc]
			&&\text{\footnotesize1}&\text{\footnotesize2}&\text{\footnotesize3}&
			\text{\footnotesize4}&\!\!\text{\footnotesize5}
			&\; && &\text{\footnotesize1}&\text{\footnotesize2}&\text{\footnotesize3}&
			\text{\footnotesize4}&\!\!\text{\footnotesize5} \\
			\thickbar{X}_{{\op{I}}}&=& \pZ & \pZ & \pZ & \one & \one \,,
			& \; \thickbar{Z}_{\op{I}}&=& &\pY & \pY & \pX & \one & \one \,,\\
			\thickbar{X}_{\op{A}}&=& \pZ & \one & \pZ & \pZ & \one \,,
			& \; \thickbar{Z}_{\op{A}}&=& &\pZ & \pX & \pX & \one & \one \,,\\
			\thickbar{X}_{\op{III}}&=& \one & \one & \one & \one & \pZ \,,
			& \; \thickbar{Z}_{\op{III}}&=&-\!\!\!\!&\pZ & \pY & \pY & \pY & \pY \,,\\
			\thickbar{X}_{\op{II}}&=& \one & \pZ & \pZ & \one & \one\,,
			& \; \thickbar{Z}_{\op{II}}&=&-\!\!\!\!&\pY & \pX & \pZ & \pY & \pZ \,,\\
			\thickbar{X}_{\op{B}}&=& \one & \pZ & \pZ & \pZ & \pZ\,,
			& \; \thickbar{Z}_{\op{B}}&=& -\!\!\!\!&\pZ & \pY & \pY & \pX & \pZ\,.\\
		\end{matrix}
	\end{equation}
	In Eq.~\eqref{eq:U3exp}, the sets $\Omega_{j}, V_{j}, W_j$ are given by
	\begin{equation}
		\!\!\begin{aligned}
			\Omega_{\text{I}}=& \{2\}\,,  
			&W_{\text{I}}= &\{1,2,3\}\,, 
			&V^{\text{I}}_{z}=& \{1\}\,,
			\\
			\Omega_{\text{A}}=& \varnothing\,,   
			& W_{\text{A}}= &\{2,3\}\,, \;\; 
			&V_{\text{A}}=& \{1\}\,, \\
			\Omega_{\text{III}}=& \{2\}\,,    & W_{\text{III}} =&\{2,3,4,5\}\,,   &V_{\text{III}}=& \{1,3,4,5\}\,, \\
			\Omega_{\text{II}}=& \varnothing \,,   & W_{\text{II}}= &\{1,2,4\}\,, \  &V_{\text{II}}=& \{1,3,4,5\}\,, \\
			\Omega_{\text{B}}=& \varnothing \,,  & W_{\text{B}}= &\{2,3,4\}\,, \,\,  &V_{\text{A}}=& \{1,2,3,5\}\,. 
		\end{aligned}
	\end{equation}
	
	Note that since the logical $\Zbar$~gates shown in Eq.~\eqref{eq:LogicalXZ} have extra phases, we need to introduce a new set of gates described by $V^{j}_{\omega}$. 
	After expressing the logical $\Zbar$~gates via X and Z and because of $\pY=i\pX\pZ$,
	only $Z_\text{I}$ and $Z_{\text{III}}$ will have extra phases.
	
	Second, $U^{\dagger}_2$ can be written as
	\begin{equation}
		U^{\dagger}_2=\!\!\!\!\prod_{j\in \{\text{A,B,I,II,III}\}} \!\!\!\! {\CZbar}_{j}= \!\!\!\!\!\!\prod_{(u,v)\in \E_{\regdE|\regE\gamma}} \!\!\!\!\CZ_{uv} \,,
	\end{equation}
	by decomposing $\CXbar$ into $\CZ$ gates according to Eq.~\eqref{eq:LogicalXZ}. 
	Here the set $\E_{\regdE|\regE\gamma}$ is given by
	\begin{equation}\label{eq:EBgates2}\centering
		\begin{aligned}
			\E_{\regdE|\regE\gamma}=\big\{ &(\text{I},1), (\text{I},2), (\text{I},3), (\text{A},1), (\text{A},3), (\text{A},4), \\ & (\text{III},5),
			(\text{II},2), (\text{II},3), (\text{B},2),\\ &  (\text{B},3), (\text{B},4), (\text{B},5)\big\} \,.
		\end{aligned}
	\end{equation}
	The last unitary $U^{\dagger}_1$ is
	\begin{align}
		\label{eq:decoding1}
		U^\dagger_1= \!\!\!\! \prod_{(u,v) \in \E_{\regdE},\E_{\regE\gamma}} \!\!\!\! \CZ_{uv} \,,
	\end{align}
	with the sets $\E_{\regE\gamma}$ and $\E_{\regdE}$ given by
	\begin{align}
		\E_{\regE\gamma}=&
		\big\{(\text{II}, \text{A}),(\text{II}, \text{III}),(\text{B}, \text{A}),(\text{B}, \text{III}),(\text{B}, \text{II})\big\}\,, \\
		\E_{\regdE}=&\big\{(2, 1),(4, 2),(4, 3),(5, 4)\big\} \,.
	\end{align}
	
	Starting from the encoded state $\ket{\phi_{\enc}}$ we introduce five extra qubits, two red from the bulk region $\regE$ and three black from the cut (Fig.~\ref{fig:rec}b), such that
	\begin{align}\label{eq:decoding0}
		\ket{\psi_1}= \ket{+}^{\otimes 5}_{\regE\gamma} \otimes \ket{\phi_{\enc}}_{\PP}
		\,. 
	\end{align}
	For our graphs $\ket{G^{\text{opt}}}$ and $\ket{g}$, we need to apply local Clifford operations after $U^\dagger_{\op{g}}$ and not before, since $\widetilde{U}^\dagger_{\op{h}}={Z_BU^\dagger_{\op{g}}}$. % 
	Therefore, apply $U^\dagger_{\op{g}}=U^\dagger_1 U^\dagger_2 U^\dagger_3$ to obtain
	\begin{equation}\label{eq:decoding-1}
		\ket{\psi_0}=\Big(U^{\dagger}_1 U^{\dagger}_2 U^{\dagger}_3 \otimes \one_{\regdF} \Big)\ket{\psi_{1}}\,,
	\end{equation}
	and apply $Z_B$ to obtain the decoded state,
	\begin{equation}    
		\ket{\psi_{\text{dec}}}=Z_B\ket{\psi_0}\,.
	\end{equation}
	
	The state $\ket{\psi_{\text{dec}}}$ has the same reduction on $\regE$ as $\ket{\phi_{\inc}}$ [see Eq.~\eqref{eq:partialdecoding}]. Thus the bulk information on $\regE$ can be recovered from the state $\ket{\psi_{\text{dec}}}$ by only reading its nearby boundary $\regdE$. Thus the sequence of unitary gates Eq.~\eqref{eq:decoding0} to Eq.~\eqref{eq:decoding-1} realizes partial decoding from the encoded state and hence can be used to demonstrate holographic bulk reconstruction.
	
	In summary, we need $12$ qubits to prepare a code state, $16$ qubits to encode an arbitrary state and $17$ qubits to perform partial recovery. 
	
	A circuit to perform encoding followed by partial decoding can also be realized with $17$ qubits, as shown in Fig.~\ref{fig:Expsetup}.
	The encoding needs $12$ boundary qubits and $4$ bulk qubits initialized in a product state and $\ket{\phi_{\text{in}}}$, respectively. The result of the encoding is a product state in the bulk and an encoded state in the boundary $\ket{\phi_{\text{enc}}}$.
	The partial decoding needs $12$ qubits from the encoded state in the boundary, $2$ qubits (A and B) from the bulk region $\regE$ that we aim to recover and $3$ qubits (I, II and III) from the cut $\gamma$. 
	The remaining bulk qubits (C and D) can be recycled to act as qubits on the cut for the partial decoding. Fig.~\ref{fig:Expsetup} shows how the qubits C and D are reused as I and II. Note that we still need an additional qubit III to perform the partial decoding;
	we introduce such a qubit in the centre.
	
	We emphasise that the experimental setup in Fig.~\ref{fig:Expsetup}  describes the case of performing a partial decoding when we have previously performed a general encoding.
	For performing a partial decoding given a $12$-qubit boundary state, we have to add $3$ extra qubits as shown in Fig.~\eqref{fig:rec}b.

	\begin{figure}
		\centering
		\includegraphics[width=0.33\textwidth]
		{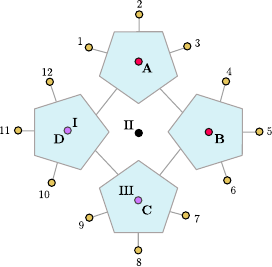}
		\caption{\textbf{Experimental setup with $17$ qubits.} 
			The bulk qubits A, B, C and D are encoded into the boundary qubits $1, \, \dots,\, 12$. After encoding,  we reuse the qubits C (now III) and D (now I) and add an extra qubit III to perform the partial decoding.
			This procedure recovers the bulk degrees of freedom A and B from the boundary given by the qubits $1$ to $5$. \label{fig:Expsetup}}
	\end{figure}

	\subsection{Experimental feasibility}
	\label{sect:expfeas}
	For the implementation of the graph states presented in this work the ability to entangle arbitrary qubits is essential. In many state-of-the-art approaches, the interaction between qubits is local, which constrains the connectivity of the artificial quantum system. Hence, the long-range entangling gates have to be decomposed into local entangling gates which then increases the number of gates necessary to realize an equivalent circuit. However, several platforms are outstanding with their ability to generate non-local connectivity between qubits and hence allow to prepare the entangled graph states. Platforms which provide such connectivity are trapped ions~\cite{Ringbauer2022}, Rydberg arrays~\cite{Bluvstein2022}, (artificial) atoms coupled to a cavity~\cite{Devoret2013, Periwal2021}.   
	In the case of trapped ions, the long-range interaction between two qubits can be achieved by either using the phonon degrees of freedom in an ion crystal~\cite{Porras2004} or a shuttling approach~\cite{PhysRevLett.109.080501}, i.e., moving the ions next to each other and then entangle them. Similarly, recent experimental progress in Rydberg atom arrays allow for entangling arbitrary pairs of atoms by shuttling the atoms, bringing two atoms next to each other and then entangle them by using the Rydberg blockade mechanism. 
		
	In the case of (artificial) atoms coupled to a cavity, the two typical platforms are superconducting qubits coupled to a microwave cavity or Rubidium atoms coupled to a cavity. In superconductor based platforms Josephson junctions are used to form an artificial two level system.~\cite{Devoret2013}, whereas in the case of atomic systems one frequently uses internal states of the atom~\cite{Periwal2021}. The range of the interaction can be engineered by exploiting that a photon can travel along a cavity, while the atoms are connected to the cavity. Controlling the coupling between the (artificial) atom and the cavity then allows to engineer long-range interactions.
	
	In principle, all of these three systems trapped ions, Rydberg arrays or (artificial) atoms coupled to a cavity are able to perform arbitrary local unitary operations and one entangling operation between two arbitrary qubits, which is sufficient to perform arbitrary unitary operations between two qubits~\cite{Lloyd1995}. For example, Rydberg atom arrays recently demonstrated the generation of 12-qubit cluster state and a seven-qubit Steane code with related stabilizer measurements~\cite{Bluvstein2022}. Similar results were achieved in ion trap setups, where for example a 7 qubit color code~\cite{Nigg2014} or a 9-qubit Bacon Shore code state~\cite{Egan2021} were realized. 
	
	Ref.~\cite{Quantinuum} bench-marked a trapped-ion setup with single-qubit, two-qubit gate and measurement fidelity of $0.99994(3)$, $0.9981(3)$ and $0.9972(5)$, respectively.
	Using such a setup, we make a simple estimation of the state fidelity.
	The logical zero state is obtained with a fidelity of $0.947(8)$ from the gate sequence described in Eq.~\eqref{eq:Gopt}. 
	Other approaches to prepare the logical zero state require non-destructive stabilizer measurements~\cite{Abobeih_2022} and result in a fidelity of $0.88(2)$. 
	The state fidelity shows a significant advantage in preparing the logical zero state as a graph compared to stabilizer measurements. 
	This estimation is developed with more details in Appendix~\ref{app:fidelity}.
	
	\subsection{Related work}
	
	Ref.~\cite[Section 5.7-8]{Pastawski_2015} already highlighted that the hyperbolic pentagon can be formulated in the stabilizer formalism.
	The work of Ref.~\cite{PhysRevA.101.042305} then explored more explicitly the stabilizer formulation, taking advantage of index contractions formulated in terms of Bell state projections and obtaining corrections to the Ryu Takanayagi formula for entangled input states. 
	The resulting codes are not yet optimized over local Clifford operations to reduce experimental requirements.
	Ref.~\cite{Beigi_2011} introduced the concatenation of quantum codes through their graph state representation. 
	This method of concatenation is not directly applicable to a more general tensor network such as the hyperbolic pentagon code, as in this case more general index contractions are required.

	\section{Discussion}
	In this work we provided a systematic method to represent the hyperbolic pentagon (HaPPY) code as a stabilizer graph code.
	This allows to engineer as of now theoretical models of holography in artificial quantum systems. 
	Interestingly, the formulation as a graph code applies to any code defined through a tensor network with stabilizer states as building blocks.
	Furthermore, the method is not restricted to qubits, 
	but, with suitable generalizations, also applies to qudits in prime dimensions.
	This provides us with the tools to engineer other codes in the same manner, e.g., the holographic state~\cite{https://doi.org/10.48550/arxiv.1502.06618}, holographic CSS codes~\cite{Harris_2018}, 
	and random stabilizer tensor networks~\cite{PhysRevLett.125.241602}.
	
	Regarding the scalability of our proposal,
	the local Clifford optimization as performed here is a limiting factor to find experimentally suitable formulations for larger instances of the hyperbolic pentagon code. 
	However, an inspection of the building blocks graphs before the contraction can provide intuition on how the final contracted graph code might look like.
	Reducing the number of entangling gates and keeping them short range is crucial for current noisy intermediate-scale quantum (NISQ) devices.
	
	Several open questions of interest remain.
	It is unclear whether the usage of symmetric building blocks always helps in reducing the local Clifford gate complexity for optimizing the number and range of edges. 
	It is also an unknown whether there is an efficient iterative procedure to build a larger holographic graph layer by layer.
	Finally, it would be interesting to understand 
	the advantages and disadvantages of experimentally implementing the hyperbolic pentagon code in prime dimensions.
	
	\subsection{Data availability}
	
	There is no additional data.
	
	\subsection{Code availability}
	
	The code that supports the findings is available in \url{https://github.com/ganglesmunne/Engineering_holography}.
	
	\subsection{Acknowledgments} 
	We want to thank Máté Farkas, Michał Horodecki, Maciej Lewenstein, Pavel Popov, Anna Garcia Sala, Adam Burchardt, Martin Ringbauer, Markus Grassl and Karol Życzkowski for comments and fruitful discussions. 
	
	G.A.M. and F.H. are supported by the Foundation for Polish Science through TEAM-NET (POIR.04.04.00-00-17C1/18-00). 
	Much of this work was done while F.H. was working at the Jagiellonian University in Kraków.
	
	ICFO group acknowledges support from: ERC AdG NOQIA; MCIN/AEI (PGC2018-0910.13039/501100011033, CEX2019-000910-S/10.13039/501100011033, Plan National FIDEUA PID2019-106901GB-I00, Plan National STAMEENA PID2022-139099NB-I00 project funded by MCIN/AEI/10.13039/501100011033 and by the “European Union NextGenerationEU/PRTR” (PRTR-C17.I1), FPI); QUANTERA MAQS PCI2019-111828-2; QUANTERA DYNAMITE PCI2022-132919 (QuantERA II Programme co-funded by European Union’s Horizon 2020 program under Grant Agreement No 101017733), Ministry of Economic Affairs and Digital Transformation of the Spanish Government through the QUANTUM ENIA project call - Quantum Spain project, and by the European Union through the Recovery, Transformation, and Resilience Plan - NextGenerationEU within the framework of the Digital Spain 2026 Agenda; Fundació Cellex; Fundació Mir-Puig; Generalitat de Catalunya (European Social Fund FEDER and CERCA program, AGAUR Grant No. 2021 SGR 01452, QuantumCAT U16-011424, co-funded by ERDF Operational Program of Catalonia 2014-2020); Barcelona Supercomputing Center MareNostrum (FI-2023-1-0013); EU Quantum Flagship (PASQuanS2.1, 101113690); EU Horizon 2020 FET-OPEN OPTOlogic (Grant No 899794); EU Horizon Europe Program (Grant Agreement 101080086 — NeQST), ICFO Internal “QuantumGaudi” project; European Union’s Horizon 2020 program under the Marie Sklodowska-Curie grant agreement No 847648; “La Caixa” Junior Leaders fellowships, La Caixa” Foundation (ID 100010434): CF/BQ/PR23/11980043. Views and opinions expressed are, however, those of the author(s) only and do not necessarily reflect those of the European Union, European Commission, European Climate, Infrastructure and Environment Executive Agency (CINEA), or any other granting authority. Neither the European Union nor any granting authority can be held responsible for them.

	\subsection{Author contributions}
	
	All authors contributed to the scientific process and the writing of the article. FH and VK developed the conception of the project and the main theoretical idea. GAM and FH mainly worked out the theoretical details, while GAM and VK specifically considered the experimental feasibility.
	
	\subsection{Competing interests}
	The Authors declare no Competing Financial or Non-Financial Interests.
	\appendix\onecolumngrid

	\section{Index contraction on stabilizer states}\label{app:stabilizer}

	\begin{restatable}{fact}{factStabBell}\label{factStabBell}
		Let $\ket{\psi}$
		be an $m$-qubit stabilizer state and contract two indices of it. Then the state after the contraction $\ket{\chi}$ will be a stabilizer state.
	\end{restatable}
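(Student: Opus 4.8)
The plan is to work directly with the check matrix $C=(A\mid B\mid\omega)$ of $\ket{\psi}$ and track what the Bell-state projection $P^+=\ketbra{\phi^+}{\phi^+}\otimes\one^{\otimes m-2}$ does to the stabilizer. Recall from Eq.~\eqref{eq:Bellcontr} that contracting the first two indices is, up to tensoring on the fixed Bell state $\ket{\psi^+}$, the map $\ket{\psi}\mapsto\ket{\chi}$. So it suffices to show that $\ket{\psi^+}_{12}\otimes\ket{\chi}_{3\dots m}$ is a stabilizer state, since then $\ket{\chi}$ is obtained by tracing out a stabilizer pair, and a reduced state of a pure stabilizer state supported on a tensor factor is again (proportional to) a stabilizer projector; if the result is pure it is a stabilizer state. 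The cleanest route is therefore: (i) the (unnormalized) Bell state $\ket{\phi^+}$ on qubits $1,2$ is stabilized by $\langle X_1X_2,\;Z_1Z_2\rangle$; (ii) $P^+$ is proportional to the projector $\tfrac14(\one+X_1X_2)(\one+Z_1Z_2)$ onto that stabilizer code; (iii) projecting a stabilizer state onto a stabilizer code yields a state proportional to a stabilizer projector.

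For step (iii) I would use the standard structure of intersecting Pauli groups. Let $S=\langle g_1,\dots,g_m\rangle$ be the stabilizer of $\ket{\psi}$ and $S'=\langle X_1X_2,\,Z_1Z_2\rangle$ the Bell stabilizer on qubits $1,2$. Consider the subgroup $S_0=\{g\in S : g \text{ commutes with every element of } S'\}$. Since $S'$ is generated by two elements, $S_0$ has index at most $4$ in $S$, so $|S_0|\ge 2^{m-2}$. The projected (unnormalized) state is $P^+\varrho_\psi P^+$ with $\varrho_\psi=\frac1{2^m}\sum_{g\in S}g$; expanding and using $P^+ g P^+ = 0$ whenever $g$ anticommutes with $X_1X_2$ or $Z_1Z_2$, and $P^+ g P^+ = g P^+$ (equivalently $P^+gP^+ = \tfrac12 g(\one\pm\,\cdot)\cdots$) when $g\in S_0$, one gets that $P^+\varrho_\psi P^+$ is proportional to $\big(\sum_{g\in S_0} g\big)P^+$, i.e. to the average over the group $\langle S_0, X_1X_2, Z_1Z_2\rangle$ — which is an abelian subgroup of the Pauli group not containing $-\one$ (it is contained in the projector $P^+$ which is nonzero). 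Hence the projected state is a stabilizer state (pure iff $\langle S_0, S'\rangle$ has $m$ independent generators, which it does when the contraction is performed on two qubits of a pure stabilizer state and $\ket{\chi}$ is normalizable). Restricting to qubits $3,\dots,m$ then gives that $\ket{\chi}$ is a stabilizer state; the explicit generators come from the generators of $S_0$ after deleting the now-trivial action on qubits $1,2$, together with the parity/sign bookkeeping (the multiplication of generators to land in $S_0$ can flip signs, exactly as in the $\omega$-vector discussion in the main text, e.g. $(X\otimes X)(Z\otimes Z)=-Y\otimes Y$).

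Concretely, as an algorithmic recipe I would: row-reduce $C$ so that the generators split into those commuting with $X_1X_2$, those commuting with $Z_1Z_2$, and a minimal set that does not; multiply pairs of the latter to eliminate all but at most two anticommuting generators, which get replaced by $X_1X_2$ and $Z_1Z_2$ themselves (discarding the anticommuting ones corresponds to the projection killing those branches); then delete columns $1,2$ from the $A$ and $B$ blocks of the surviving generators, carrying along the updated $\omega$. This both proves the Fact and furnishes the explicit stabilizer of $\ket{\chi}$ promised in the main text.

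The main obstacle is the bookkeeping in step (iii): making precise that, after discarding anticommuting generators and adjoining $X_1X_2,Z_1Z_2$, the resulting group is still independent, still sign-consistent (no $-\one$), and still has the right size $2^{m-2}$ on qubits $3,\dots,m$. This requires the counting argument $[S:S_0]\le 4$ to be tight whenever $\ket{\chi}$ is pure, i.e. showing that the projection is nonzero and that exactly two independent degrees of freedom are absorbed by the Bell constraint; the degenerate cases (where some $g\in S$ already acts as $X_1X_2$ or $Z_1Z_2$, so fewer than two generators are consumed, and $\ket{\chi}$ on the remaining qubits is still pure but the arithmetic shifts) need to be handled separately. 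I would dispatch these by a clean invocation of the symplectic-form rank argument on the restriction of $S$ to qubits $1,2$, which also matches the reference to \cite[Section 2]{Audenaert_2005} cited in the main text.
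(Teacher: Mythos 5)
Your proposal takes essentially the same route as the paper: the contraction is rewritten as a projection onto the Bell state followed by a partial trace, and each of these operations is shown to preserve the stabilizer property. The only difference is that the paper dispatches the two steps by citing standard references (Nielsen--Chuang for projection onto a stabilizer code, Audenaert--Plenio for the partial trace), whereas you prove them directly via the index-$\leq 4$ commuting-subgroup argument; that extra detail is sound and coincides with the generator-extraction recipe the paper gives immediately after its proof.
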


	\begin{proof}
		
		Eq.~\eqref{eq:Bellcontr} shows that contracting two indices of $\ket{\psi}$ corresponds to a projection onto the Bell state 
		$\ket{\phi^+}$ followed by a partial trace. 
		Since $\ket{\psi}$ and $\ket{\phi^+}$ are stabilizer states, the projection leads to another stabilizer state (see Ref.~\cite[Chapter 10.5.3]{nielsen_chuang_2010}).
		Finally, also performing a partial trace of a stabilizer state also leads to a stabilizer state (see Ref.~\cite[Section 2]{Audenaert_2005}) and so, $\dyad{\chi}$ is stabilizer.  This ends the proof.
	\end{proof}

	\noindent{\bf The generators of $\ket{\chi}$.} Ref.~\cite[Chapter 10.5.3]{nielsen_chuang_2010} shows a method to obtain the stabilizer group of a state after the projection onto the Bell state at positions $i$ and $j$.
	Recalling that the generators of the Bell state  are $\{\pX\ot \pX, \pZ \ot \pZ\}$, this method requires:
	\begin{itemize}
		\item[(i)] Perform elementary row operations on the check matrix of the $m$-qubit state $\ket{\psi}$ which leads to, at most, one row anti-commuting with $\pX_i \ot \pX_j$ and another row anti-commuting with $\pZ_i \ot \pZ_j$.
		\item[(ii)] Substitute $\pX_i \ot \pX_j$ and $\pZ_i \ot \pZ_j$ for their corresponding anti-commuting rows to obtain the check matrix of the projected state $\ket{\phi^+} \ot\ket{\chi}$ [see Eq.~\eqref{eq:Bellcontr}].  
	\end{itemize}
	By removing the two columns that index the projected qubits, followed by removing the two rows which are now linearly dependent, one obtains the generators of the contracted state $\ket{\chi}$.  This method performs elementary row operations and partial trace over the Bell state.  Even with a trivial vector phase $\omega=\bmR{0}$, both may introduce non-trivial phases in the parity check matrix.
	
	We now describe how to calculate these changes in $\omega$:
	
	\smallskip
	\noindent {\bf Changes in $\omega$ under elementary row operations}.
	When two rows are added under the elementary row operations of (i), a non-trivial phase may appear.
	This corresponds to the multiplication of the two generators, 
	whose one-qubit Pauli matrices do not commute in $2 +4j$ with $j\in \N$ positions.
	This phase must necessarily be real since $-\one \notin S$. 
	
	\smallskip
	\noindent {\bf Changes in $\omega$ after partial tracing the Bell state}.
	Let us write the generators of $\ket{\chi}$ 
	as $g_i = \alpha_i \tilde g_i$  where $\alpha_i \in \{-1,1\}$ carries the phase.
	The stabilizer of the Bell state is 
	$\{\one \otimes \one,\, 
	\pX\ot\pX,\,
	\pZ\ot\pZ,\,
	-\pY\ot\pY \}$. 
	Then we can list the generators of $\ket{\phi^+}\otimes \ket{\chi}$ as on the left-hand side of Eq.~\eqref{eq:phasesgenerator}.
	\begin{equation}\label{eq:phasesgenerator}
		\begin{matrix}[ccc|c]
			\ket{\phi^{+}}      &\ot&\!\!\!\! \ket{\chi} & \, \phantom{-}\\ \hline
			\pX \ot \pX &\ot    &\!\!\!\! \tilde g_1 & \phantom{-}\alpha_1\\
			\pZ \ot \pZ & \ot   &\!\!\!\! \tilde g_2 & \phantom{-}\alpha_2\\
			\pY \ot \pY & \ot   &\!\!\!\! \tilde g_3 & \phantom{-}\alpha_3\\  
			\vdots & \vdots     &\!\!\!\!  \vdots  & \vdots \\
			\one \ot \one & \ot &\!\!\!\! \tilde g_m & \phantom{-}\alpha_m\\
			\multicolumn{3}{c}{+ \,\text{two extra rows}}
		\end{matrix} 
		\quad
		\xrightarrow{\text{Partial trace over}\,\,\ket{\phi^+}}
		\quad
		\begin{matrix}[c|c]
			\ket{\chi} & \phantom{-} \\ \hline
			\tilde g_1 & \phantom{-}\alpha_1\\
			\tilde g_2 & \phantom{-}\alpha_2\\
			\tilde g_3 & -\alpha_3\\  
			\vdots & \phantom{-} \vdots \\
			\tilde g_m & \phantom{-}\alpha_m \\
			\multicolumn{2}{c}{}
			
		\end{matrix}
	\end{equation} 
	
	Here the two extra rows are those that are linearly dependent with respect to the remaining generators after removing the two columns that index $\ket{\phi^{+}}$.
	To partial trace over the subsystem with the Bell state, one needs to remove its stabilizer group from the generators of $\ket{\phi^+}\otimes \ket{\chi}$, in addition to removing the two linear dependent rows leading to left-hand side of Eq.~\eqref{eq:phasesgenerator}.
	Since one of the stabilizers of $\ket{\phi^{+}}$ has a sign $-\pY\ot\pY$, whenever one removes the element $\pY\ot\pY$ from the Bell state subsystem, one needs to change the sign of corresponding generator from $\ket{\chi}$. We can see this in the case of the generator $g_3$ on the left-hand side of Eq.~\eqref{eq:phasesgenerator}.

	\section{Index contraction on graph states}\label{app:graphH}

	\begin{restatable}{fact}{factHal}
		\label{factHal}
		Let $\ket{G}$ be graph state on $m$ qubits. Contracting two indices $i$ and $j$ from $\ket{G}$ yields another graph state $\ket{G'}$ on $m-2$ qubits, up to a single layer of $\Had$~gates followed by a layer of $\pZ$~gates. 
	\end{restatable}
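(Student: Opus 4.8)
The plan is to reduce the claim about graph states to the already-established Fact~\ref{factStabBell} and its constructive corollary (the explicit recipe for the generators of $\ket{\chi}$), and then to track exactly which Pauli type appears on the generators of the contracted state. First I would recall that a graph state $\ket{G}$ on $m$ qubits has check matrix $(\one\mid\Gamma)$ with trivial phase vector $\omega=\bmR{0}$; its generators are the $g_a=\pX_a\bigotimes_{b\in N(a)}\pZ_b$ of Eq.~\eqref{eq:generators}. Contracting indices $i$ and $j$ is, by Eq.~\eqref{eq:Bellcontr}, a projection onto the Bell state $\ket{\phi^+}$ at positions $i,j$ followed by a partial trace; by Fact~\ref{factStabBell} the result $\ket{\chi}$ on $m-2$ qubits is a pure stabilizer state, hence by the cited result (LC-equivalence of all stabilizer states to graph states) it is a graph state up to local Cliffords. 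The content of Fact~\ref{factHal} is the sharper statement that the \emph{only} local Cliffords needed are a single layer of $\Had$ followed by a layer of $\pZ$, i.e.\ that the contracted check matrix can be brought to the form $(\one\mid\Gamma'\mid\omega')$ by row operations alone, with the $\pZ$-layer absorbing $\omega'$.

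The key step is therefore to run the generator recipe from Appendix~\ref{app:stabilizer} and argue that the $\pX$-part of the resulting $(m-2)\times 2(m-2)$ check matrix has full rank $m-2$, so that row-reducing it yields a matrix of the form $(\one\mid\Gamma')$ with $\Gamma'$ necessarily symmetric (since the generators commute, by Eq.~\eqref{eq:pc_comm} with $A_1=A_2=\one$ one gets $\Gamma'=\Gamma'^\intercal$) and zero diagonal (a graph-state generator squares to $+\one$, forcing $\Gamma'_{aa}=0$). Concretely: among the original generators, pick the two that anticommute with $\pX_i\ot\pX_j$ and with $\pZ_i\ot\pZ_j$ respectively — by the structure of graph-state generators, $g_i$ and $g_j$ anticommute with $\pZ_i\ot\pZ_j$ (their $\pX$ on the contracted qubit), while the generators $g_b$ for $b\in N(i)\triangle N(j)$ are the ones carrying a single $\pZ$ on $\{i,j\}$ and anticommute with $\pX_i\ot\pX_j$. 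One uses these to clear the contracted columns in all other rows, substitutes the Bell generators, deletes the two contracted columns and the two now-dependent rows, and checks that what remains has an invertible $\pX$-block. Then Hadamards on the appropriate qubits (in fact I expect on \emph{all} remaining qubits is not needed — only where the $\pX$/$\pZ$ pattern demands it, but a uniform layer of $\Had$ works after suitable relabeling) convert $(\one\mid\Gamma'\mid\omega')$ into graph-state form up to signs, and a layer of $\pZ$ gates kills $\omega'$ exactly as in the passage following Eq.~\eqref{eq:Holotograph}.

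The main obstacle I anticipate is the rank argument for the $\pX$-block: a priori, after substituting $\pZ\ot\pZ$ for a $\pZ$-type row and deleting columns, one could be left with a generator that is purely $\pZ$-type on the remaining qubits, which would force a Hadamard somewhere and — more worryingly — one must ensure the whole layer of Hadamards applied simultaneously does not reintroduce the problem on another generator. I would handle this by arguing at the level of the stabilizer group rather than a fixed generating set: the contracted state $\ket{\chi}$ is a pure stabilizer state on $m-2$ qubits whose stabilizer group, restricted to any single qubit, is not contained in $\langle\pZ\rangle$ — equivalently, no qubit of $\ket{\chi}$ is in a product $\pZ$-eigenstate, which one can see from the graph structure (every qubit $b\ne i,j$ still has its generator $g_b$ with an $\pX_b$, possibly modified on $i,j$ but those columns are gone). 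This guarantees the $\pX$-block has full rank and hence a single $\Had$-then-$\pZ$ layer suffices; the rest is the routine row-reduction and sign-bookkeeping already spelled out in Appendix~\ref{app:stabilizer}.
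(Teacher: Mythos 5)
Your proof rests on the claim that the $\pX$-part of the contracted check matrix has full rank $m-2$, so that row reduction alone brings it to the form $(\one\mid\Gamma')$. This claim is false in general, and it also sits in tension with the statement you are trying to prove: if the $\pX$-block were always invertible, no Hadamard layer would ever be needed, whereas Fact~\ref{factHal} (and the explicit $12$-qubit instance, where $V_\PP$ in Eq.~\eqref{eq:unitaryV} carries Hadamards on qubits $3,6,9,12$) shows that the contracted state generically does require a non-trivial layer of $\Had$~gates; equivalently, the contracted stabilizer group can contain non-trivial purely-$\pZ$ elements. Your argument for full rank --- that no single qubit of $\ket{\chi}$ sits in a product $\pZ$-eigenstate because ``every qubit $b\neq i,j$ still has its generator $g_b$ with an $\pX_b$'' --- fails on two counts. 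First, the absence of single-qubit purely-$\pZ$ stabilizer elements does not exclude multi-qubit ones such as $\pZ\ot\pZ$: the Bell state itself, with stabilizer $\langle \pX\ot\pX,\,\pZ\ot\pZ\rangle$, has a rank-$1$ $\pX$-block even though no qubit of it is in a $\pZ$-eigenstate. Second, the recipe of Appendix~\ref{app:stabilizer} modifies the generators by row operations and deletes two rows before the columns $i,j$ are removed, so the surviving generator ``associated to'' qubit $b$ need not retain its $\pX_b$.

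The paper's proof takes the opposite route: it accepts that the $\pX$-part may have rank $p<m-2$, brings the check matrix to the block form of Eq.~\eqref{eq:gauss} with $q=m-2-p$ rows whose $\pX$-part vanishes, proves that the block $E$ (the $\pZ$-part of those rows on the $q$ deficient qubits) is invertible via a commutation-plus-independence argument, and then applies Hadamards precisely to those $q$ qubits to swap the offending $\pX$ and $\pZ$ columns. That rank statement for $E$, not for the full $\pX$-block, is the missing idea; without it (or something equivalent) one cannot conclude that a single layer of Hadamards suffices. A smaller point: your justification for the vanishing diagonal of $\Gamma'$ (``a generator squares to $+\one$'') is vacuous, since every hermitian Pauli operator squares to $+\one$; the paper instead uses that all generators carry an even number of $\pY$'s, which in the reduced block form forces the diagonal entries to vanish.
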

	
	\begin{proof}
		Appendix~\ref{app:stabilizer} shows how to obtain the generators of the state after an index contraction. 
		This method requires
		elementary row operations on the enlarged 
		check matrix of $\ket{G}$ [left-hand side of Eq.~\eqref{eq:gauss}], 
		followed by the elimination of the corresponding two rows and columns. 
		
		To transform this new check matrix into a graph state, row-reduce its $\pX$-part via elementary row operations, 
		\begin{equation}
			\begin{pmatrix}[c|c] 
				\one_{m} & \Gamma\\
			\end{pmatrix}\,
			\xrightarrow{\substack{
					\text{Index contraction} \\ 
					\,\,
					\text{and row operations}}
				\,\,
			}
			\,\begin{pmatrix}[cc|cc]\label{eq:gauss}
				\one_{p} & A & B & C   \\ 
				0 & 0_{q} & D & E
			\end{pmatrix}\quad\text{with}\quad p+q=m-2\,.
		\end{equation}
		For a better representation we ordered the columns of the check matrix without changing the qubit labeling in the right-hand side of Eq.~\eqref{eq:gauss}. 
		This new check matrix satisfies two properties:
		\begin{itemize}
			\item[(i)] 
			The matrix $E$ is full rank.
			\item[(ii)] 
			Every row contains exactly an even number of positions where both the $X$ and the $Z$ entries are $1$;
			in other words,
			all generators contain an even number of $\pY$'s. 
		\end{itemize} 
		
		One can see (i) by contradiction: 
		suppose that $E$ is not full rank. Then by multiplying generators of $(0 \, 0_{q} \,|\, D \, E)$ 
		one can obtain an element 
		$t=(0 \, 0 \,|\, T \, 0)$. 
		However, from Eq.~\eqref{eq:pc_comm} one sees that this element $t$ commutes with $(\one_{p} A \,|\, B\, C)$ if and only if $T=0$. 
		Thus for these elements to commute, $t$ must be trivial $t=(00|00)$, which cannot happen since the generators are independent among each other. 
		Therefore $E$ is full rank. 

		To show (ii), recall that the generators of a graph state $\ket{G}$ contain Pauli $\pX$ and $\pZ$ only. 
		Then elementary row operations on the check matrix can produce an even number of $\pY$s only,
		for the stabilizer elements to remain hermitian.
		The same argument applies to the state after the elimination of those columns leading the the check matrix of the contracted state.

		Let us now row-reduce $E$ via row operations. This elimination yields the left-hand side of Eq.~\eqref{eq:gauss2}. Since the generators commute, ${D}=A^{\intercal}$
		by Eq.~\eqref{eq:pc_comm}. 
		Finally, row operations that involve 
		$(0 \, 0\,|\,D\,\one_{q})$
		set $C$ to zero. 
		In summary
		\begin{equation}\label{eq:gauss2}
			\begin{pmatrix}[cc|cc]
				\one_{p} & A & B & C \\ 
				0 & 0_{q} & {D} & \one_{q}
			\end{pmatrix} 
			\,\,
			\xrightarrow{\text{row operations}}
			\,\,
			\begin{pmatrix}[cc|cc]
				\one_{p} & A & \widetilde{B} & 0 \\ 
				0 & 0_{q} & A^\intercal & \one_{q}
			\end{pmatrix}\,.
		\end{equation}
		Here, $\widetilde B$ is symmetric since the generators commute.
		Also, $\widetilde B$ contains zeros on the diagonal, as the generators 
		can only contain an even number of $\pY$s.
		
		Applying $\Had^{\otimes q}$ to the last $q$ qubits maps $\pX$ to $\pZ$ and $\pZ$ to $\pX$.
		This application of Hadamard gates exchanges the last $q$ columns of the $\pX$ and $\pZ$ parts in the check matrix, yielding
		\begin{equation}\label{eq:graphproof}
			\begin{pmatrix}[cc|cc|c]
				\one_{p} & 0 & \widetilde{B} & A & \multirow{2}{*}{$\omega$} \\
				0 & \one_{q} & A^\intercal & 0 &
			\end{pmatrix}\,.
		\end{equation}
		Here we have included $\omega$ since previous row operations may have introduced new signs to the generators. 
		It is clear that Eq.~\eqref{eq:graphproof} is the check matrix of a graph state $(\one|\Gamma)$ with a non-trivial phase vector~$\omega$. 
		Since there are $\pX$'s in the diagonal, 
		the application of $(\pZ^{\omega_1} \otimes \dots \otimes \pZ^{\omega_{p+q}})$ corrects $\omega$ to  $(0\dots0)$.
		The check matrix of Eq.~\eqref{eq:graphproof} with a trivial phase vector is the check matrix of a graph state $\ket{G'}$. 
		This ends the proof.
	\end{proof}
	
	\section{Logical states and isometry}\label{app:logIso}

	\begin{restatable}{fact}{factstate}\label{factstate}
		The graph state $\ket{G}$ 
		of Eq.~\eqref{eq:check0} can be written as
		\begin{equation}\label{eq:graphst_app}
			\ket{G} = \sum^1_{i_1,\dots,i_k =0}
			\ket{i_1 \dots i_k}_\MM \ot
			\ket{G_{i_1 \dots i_k}}_\PP 
			\,.
		\end{equation}
		Basis elements of the code space $\ket{G_{i_1 \dots i_k}}$ with $i_1,\dots,i_k\in \{0,1\}$ can be defined as
		\begin{equation}\label{eq:logicalbasis_app}
			\ket{G_{i_1 \dots i_k}}= (-1)^{\sum_{(u,v)\in \E_\MM} i_vi_u}
			\prod^{k}_{r=1} \Xbar^{i_r}_r \ket{G_{0 \dots 0}} \,,
		\end{equation}
		where
		$\ket{G_{0\dots0}}$  
		is the logical zero state and $\{\Xbar_r\}^k_{r=1}$ are the logical $\Xbar$~gates,
		\begin{equation}\label{eq:logical0X_app}
			\ket{G_{0 \dots 0}}  = \!\!\prod_{(u,v) \in \E_\PP} \!\! \CZ_{uv} \ket{+}^{\otimes n} \,,\,\,\, \Xbar_r =\bigotimes^{n}_{s=1} \pZ^{B_{rs}}\,.
		\end{equation}
	\end{restatable}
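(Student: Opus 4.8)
The plan is a direct computation starting from the controlled-$\pZ$ description of a graph state in Eq.~\eqref{eq:graphcz}. By the block form of the adjacency matrix in Eq.~\eqref{eq:check0}, the edge set $\E$ of $\ket{G}$ partitions into three disjoint parts: the boundary--boundary edges $\E_\PP$ (recorded by $\Gamma_\PP$), the bulk--bulk edges $\E_\MM$ (recorded by $\Gamma_\MM$), and the bulk--boundary edges $\E_{\PP|\MM}$ (recorded by $B$, where $B_{rs}=1$ iff bulk qubit $r$ neighbours boundary qubit $s$). Since every $\CZ$ gate is diagonal in the computational basis all of them commute, so Eq.~\eqref{eq:graphcz} can be rearranged as
\[ \ket{G}=\Big(\prod_{(u,v)\in\E_\MM}\!\!\CZ_{uv}\Big)\Big(\prod_{(u,v)\in\E_{\PP|\MM}}\!\!\!\!\CZ_{uv}\Big)\Big(\prod_{(u,v)\in\E_\PP}\!\!\CZ_{uv}\Big)\big(\ket{+}^{\ot k}_\MM\ot\ket{+}^{\ot n}_\PP\big)\,. \]

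First I would apply the boundary layer $\prod_{\E_\PP}\CZ$, which acts on $\PP$ only and produces $\ket{+}^{\ot k}_\MM\ot\ket{G_{0\dots0}}_\PP$ with $\ket{G_{0\dots0}}$ exactly the state of Eq.~\eqref{eq:logical0X_app}. Next I would expand $\ket{+}^{\ot k}_\MM=\sum_{i_1,\dots,i_k=0}^1\ket{i_1\dots i_k}_\MM$ and apply the cross layer term by term: on the summand $\ket{i_1\dots i_k}_\MM$, a gate $\CZ_{rs}$ with $(r,s)\in\E_{\PP|\MM}$ deposits $\pZ_s$ on the boundary precisely when $i_r=1$, so the whole layer acts on the boundary factor as $\prod_{r=1}^k\prod_{s:\,B_{rs}=1}\pZ_s^{\,i_r}=\prod_{r=1}^k\big(\bigotimes_{s=1}^n\pZ^{B_{rs}}\big)^{i_r}=\prod_{r=1}^k\Xbar_r^{\,i_r}$, using the definition $\Xbar_r=\bigotimes_{s=1}^n\pZ^{B_{rs}}$ and that the $\Xbar_r$ are mutually commuting products of $\pZ$'s. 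Finally the bulk layer $\prod_{\E_\MM}\CZ$ acts on $\MM$ only, and on $\ket{i_1\dots i_k}_\MM$ each $\CZ_{uv}$ returns the scalar $(-1)^{i_ui_v}$, so this layer contributes the overall sign $(-1)^{\sum_{(u,v)\in\E_\MM}i_vi_u}$. Collecting the three steps gives Eq.~\eqref{eq:graphst_app} with $\ket{G_{i_1\dots i_k}}$ as in Eq.~\eqref{eq:logicalbasis_app}, and the $i_1=\dots=i_k=0$ term identifies $\ket{G_{0\dots0}}$ as the logical zero state.

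The step that needs the most care is the sign bookkeeping when a layer of $\CZ$ gates is applied to a superposition of bulk computational basis states: one has to check that it factors cleanly through each fixed $\ket{i_1\dots i_k}_\MM$, converting the bulk--boundary gates into the logical $\Xbar$ operators and the bulk gates into the quadratic phase, and that no imaginary phases appear since $\CZ=\text{diag}(1,1,1,-1)$ is real. A secondary remark, not part of the stated identity but worth including, is that the $\ket{G_{i_1\dots i_k}}$ are mutually orthogonal, so that they form a genuine logical basis and the induced map $\ket{i_1\dots i_k}\mapsto\ket{G_{i_1\dots i_k}}$ is the isometry of Eq.~\eqref{eq:state}; this follows from $\ket{G}$ being maximally entangled across $\MM|\PP$, equivalently $\rank(B)=k$, as recalled right after Eq.~\eqref{eq:check0}.
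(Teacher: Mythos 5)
Your proposal is correct and follows essentially the same route as the paper's proof: both split the $\CZ$ product of Eq.~\eqref{eq:graphcz} into the three edge classes $\E_\PP$, $\E_{\PP|\MM}$, $\E_\MM$, identify the boundary layer with $\ket{G_{0\dots0}}$, convert the cross layer into the $\Xbar_r^{i_r}$ acting on the boundary, and read off the quadratic phase from the bulk layer. The only cosmetic difference is that the paper verifies orthogonality of the $\ket{G_{i_1\dots i_k}}$ by an explicit overlap computation using $\pZ\ket{+}=\ket{-}$, whereas you invoke $\rank(B)=k$; both amount to the same fact.
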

	
	\begin{proof}
		In terms of controlled-$\pZ$ gates, the graph state $\ket{G}$ can be written as
		$\ket{G}=\prod_{(u,v)\in E} \CZ_{uv} \ket{+}^{\ot (n+k)}$
		where $E$ is the edge set of the graph. Since $\CZ$ gates on different parties commute, we can order them as follows,
		\begin{equation}
			\ket{G}=  \Big( \prod_{(u,v)\in \E_\MM} \CZ_{uv}   \Big)  \Big(\prod^k_{r=1}\prod^n_{s=1}  \CZ_{rs}^{B_{rs}} \Big)  \Big( \prod_{(u',v')\in \E_\PP}   \CZ_{u'v'} \Big)   \ket{+}^{\otimes k}_{\MM} \ot \ket{+}^{\otimes n}_{\PP} \;.
		\end{equation}
		Define $\ket{G_{0\dots0}} :=\prod_{(u',v')\in \E_\PP}  \CZ_{u'v'}\ket{+}^{\otimes n}$ and expand the bulk-boundary $\CZ$ gates to obtain
		\begin{align}\label{applog1}
			\ket{G}&=\prod_{(u,v)\in \E_\MM} \CZ_{uv} \sum^1_{i_1,\dots,i_k=0}   \Big( \prod^n_{s=1} \CZ^{B_{1s}}_{1s}\ket{i_1}_{\MM_1} 
			\otimes \dots  \otimes \prod^n_{t=1} \CZ^{B_{kt}}_{kt}\ket{i_k}_{\MM_k} \otimes  \ket{G_{0\dots0}}_\PP\Big) \nonumber\\ 
			&=
			\prod_{(u,v)\in \E_\MM} \CZ_{uv}  
			\sum^1_{i_1,\dots,i_k=0} \Big(\ket{i_1 \dots i_k}_\MM  
			\ot 
			\prod^k_{r=1} \bigotimes^n_{s=1} \pZ^{i_r{B_{rs}}} \ket{G_{0\dots0}}_\PP\Big)\,,
		\end{align}
		where we used that $\CZ_{rs} \big(\ket{i_r}\otimes \ket{i_s}\big)=\ket{i_r} \otimes \pZ^{i_r}\ket{i_s}$. 
		Define 
		$\Xbar_r :=\bigotimes^n_{s=1} \pZ^{{B_{rs}}}$ 
		and move the $\CZ$ gates into the sum,
		\begin{equation}\label{applog2}
			\ket{G}=\sum^1_{i_1,\dots,i_k=0} \Big( \prod_{(u,v)\in \E_\MM} \CZ_{uv} \ket{i_1 \dots i_k}_\MM \otimes \prod^k_{r=1} \Xbar^{i_r}_r \ket{G_{0\dots0}}_\PP \Big) \,.
		\end{equation}
		Note that when $\CZ$ gates act on the computational basis, they only add real global phases,
		\begin{equation}\label{eq:phases}
			\prod_{(u,v)\in \E_\MM} \CZ_{uv} \ket{i_1 \dots i_k}=\prod_{(u,v)\in \E_\MM} (-1)^{i_vi_u} \ket{i_1 \dots i_k}=(-1)^{\sum_{(u,v)\in \E_\MM} i_vi_u} \ket{i_1 \dots i_k}.
		\end{equation}
		Therefore, Eq.~\eqref{applog2} can be simplified to
		
		\begin{equation}\label{eq:applog3} 
			\begin{aligned}
				\ket{G}=\sum_{i_1,\dots,i_k\in \mathbb{Z}_2}   \ket{i_1 \dots i_k}_\MM \ot \ket{G_{i_1\dots i_k}}_\PP
				\quad \text{with} \quad  
				\ket{G_{i_1\dots i_k}}:= (-1)^{\sum_{(u,v)\in \E_\MM} i_vi_u} \prod^k_{r=1} \Xbar^{i_r}_r \ket{G_{0\dots0}}\,.
			\end{aligned}
		\end{equation}
		We show now that the elements $\ket{G_{i_1 \dots i_k}}$ form an orthonormal basis, $\{\Xbar_r\}^k_{r=1}$ act as the logical $\Xbar$~gates and $\ket{G_{0\dots0}}$ as the logical zero state. The orthogonality of
		the basis elements $\ket{G_{i_1\ldots i_k}}$ can be seen from 
		\begin{align}
			\braket{G_{j_1\cdots j_k}}{G_{i_1\cdots i_k}} 
			&\propto 
			\Big(
			\bra{+}^{\ot n}\prod_{(u,v)\in \E_\PP}  \CZ_{uv}  \prod^k_{r=1} \Xbar^{-j_r}_r 
			\Big)
			\Big(
			\prod^k_{r'=1}
			\Xbar^{i_{r'}}_{r'} \prod_{(u',v')\in \E_\PP}  \CZ_{u'v'}  \ket{+}^{\ot n} 
			\Big)
			\nonumber\\
			&=
			\bra{+}^{\ot n} \prod^k_{r=1} \Xbar^{-j_r}_r
			\prod^k_{r'=1} \Xbar^{i_{r'}}_{r'}
			\ket{+}^{\ot n} \\
			&=\bra{+}^{\ot n} \prod^k_{r=1} \Xbar^{(i_r - j_r)}_r
			\ket{+}^{\ot n} \,,\label{eq:orth}
		\end{align}
		where we used that the logical $\Xbar$~gates are composed of $\pZ$ gates, and therefore, they commute with $\CZ$ gates and cancel each other.
		Recall that $\pZ\ket{+}=\ket{-}$ so that, Eq.~\eqref{eq:orth} is different to $0$ if and only if $j_r=i_r$, for all values of $r$. Consequently the elements $\ket{G_{i_1\cdots i_k}}$ form a computational basis of a subspace which is identified as code space.
		
		The expression in Eq.~\eqref{eq:applog3} matches that of Eq.~\eqref{eq:graphst_app}. This ends the proof.
	\end{proof}
	
	\section{Logical gates and code subspace}\label{app:OpCode}
	
	\begin{restatable}{fact}{factlog}\label{factlog} Let $\ket{G}$ be the graph state given by   Eq.~\eqref{eq:graphst}. There are elementary row operations on the boundary part of its check matrix, written in Eq.~\eqref{eq:checkphy}, 
		such that:
		\begin{itemize}[noitemsep,topsep=0pt,parsep=0pt,partopsep=0pt]
			\item[i)] the first $n-k$ 
			rows contain the code generators,
			\item[ii)]
			the next $k$ rows contain the logical $\Zbar$~gates,
			\item[iii)] 
			the last $k$ rows contain the logical $\Xbar$~gates.
		\end{itemize}
		Performing such elementary row operations on Eq.~\eqref{eq:checkphy} leads to   
		\begin{equation}\label{eq:checkLogOpxz_app}
			\begin{pmatrix}[c]
				\CC_{G} \\ 
				\thickbar{Z}_{}  \\
				\thickbar{X}_{} 
			\end{pmatrix}=
			\begin{pmatrix}[cc|cc]
				\one_{n-k} & -B^\intercal_1(B_2^\intercal)^{-1}& [\Gamma_{1}-B^\intercal_1(B_2^\intercal)^{-1}\Gamma_2] &  [\Gamma^\intercal_{2}-B^\intercal_1(B_2^\intercal)^{-1}\Gamma_3] \\
				0 & (B^\intercal_2)^{-1}& (B^\intercal_2)^{-1}\Gamma_2 & (B^\intercal_2)^{-1}\Gamma_3 \\ 
				0 & 0 & B_1 & B_2 \\ 
			\end{pmatrix}\,.
		\end{equation}
	\end{restatable}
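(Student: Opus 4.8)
The plan is to reduce the boundary check matrix of Eq.~\eqref{eq:checkphy} by a Gaussian-elimination-type procedure whose pivot structure is dictated by the invertible block $B_2$ (invertible since $\rank(B_2)=k$), and then to certify the three resulting row blocks as code generators, logical $\Zbar$~gates, and logical $\Xbar$~gates by reading off their mutual commutation relations from Eq.~\eqref{eq:pc_comm}. Before starting I would record: $\Gamma_1$ and $\Gamma_3$ are symmetric, being diagonal blocks of the symmetric adjacency matrix $\Gamma_\PP$, whose off-diagonal block paired with $\Gamma_2$ is $\Gamma_2^\intercal$; by Fact~\ref{factstate} the bottom block $(0\ \ 0\ |\ B_1\ \ B_2)$ of Eq.~\eqref{eq:checkphy} is already the family of logical $\Xbar$~gates $\Xbar_r=\bigotimes_s \pZ^{B_{rs}}$; and the top $n$ rows $(\one_n\ |\ \Gamma_\PP)$ are the boundary restrictions of the $\pX$-type generators of the graph state Eq.~\eqref{eq:check0}, so, the bulk being in $\ket{0\dots0}$, they stabilize the logical zero state $\ket{G_{0\dots0}}$.

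Next I would carry out three explicit moves on Eq.~\eqref{eq:checkphy}, in this order: (a) left-multiply the middle block of $k$ rows $(0\ \ \one_k\ |\ \Gamma_2\ \ \Gamma_3)$ by $(B_2^\intercal)^{-1}$, an invertible recombination of those rows, producing the candidate logical-$\Zbar$ block $(\,0\ \ (B_2^\intercal)^{-1}\ |\ (B_2^\intercal)^{-1}\Gamma_2\ \ (B_2^\intercal)^{-1}\Gamma_3\,)$; (b) add $-B_1^\intercal$ times this new block to the first $n-k$ rows $(\one_{n-k}\ \ 0\ |\ \Gamma_1\ \ \Gamma_2^\intercal)$, producing the candidate code-generator block; (c) leave the bottom $k$ rows unchanged. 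A direct check then matches the outcome with Eq.~\eqref{eq:checkLogOpxz_app}. The row transformation effecting (a)--(c) is block upper triangular with invertible diagonal blocks $\one_{n-k},(B_2^\intercal)^{-1},\one_k$, hence invertible, so the three blocks retain $n-k$, $k$, and $k$ independent rows (the original rows being independent because their $\pX$-parts contain $\one_n$ in the top two blocks and $0$ in the bottom, with $B_2$ invertible).

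The crux is then to verify the symplectic relations through Eq.~\eqref{eq:pc_comm}. The decisive computations are: the commutator matrix of the code block with the $\Xbar$ block is $B_1^\intercal-B_1^\intercal(B_2^\intercal)^{-1}B_2^\intercal=0$; that of the $\Zbar$ block with the $\Xbar$ block is $(B_2^\intercal)^{-1}B_2^\intercal=\one_k$, so $\Zbar_r$ anticommutes with $\Xbar_r$ and commutes with $\Xbar_{r'}$ for $r\neq r'$; and the self-pairings of the code and $\Zbar$ blocks, and the cross-pairing between them, all vanish, where the symmetry of $\Gamma_1$ and $\Gamma_3$ is exactly what makes the relevant products equal to their transposes modulo $2$. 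From this the interpretation follows: the bottom block is the logical $\Xbar$ by Fact~\ref{factstate}; the $\Zbar$ block commutes with all code generators (so preserves the code space) and pairs canonically with $\Xbar$, which is the defining property of the logical $\Zbar$; and the $n-k$ rows of the code block are independent, mutually commuting, commute with every $\Xbar_r$, and are $\F_2$-linear combinations of the rows of $(\one_n\ |\ \Gamma_\PP)$, hence stabilize $\ket{G_{0\dots0}}$. Since applying the $\Xbar_r$ carries $\ket{G_{0\dots0}}$ to a basis of the code space $\CC$ (Fact~\ref{factstate}), the code block stabilizes all of $\CC$, and by dimension count its $n-k$ rows generate the code stabilizer.

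Finally, I would follow the phase vector $\omega$ through these moves using the bookkeeping of Appendix~\ref{app:stabilizer}: the additions of Pauli strings in step (b), and any relabelings, can flip signs, which is the origin of the extra signs attached to the logical $\Zbar$~gates and to the code generators noted in the main text; they are computed by counting the positions where the multiplied one-qubit factors anticommute. The algebra itself is routine; I expect the main obstacle to lie not in any single computation but in organizing moves (a)--(c) with precisely the pivoting that forces the symplectic relations into canonical form---$\one_k$ pairing and zero elsewhere, rather than a sheared basis---and in carrying the sign bookkeeping through faithfully.
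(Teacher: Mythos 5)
Your proposal is correct and follows essentially the same route as the paper's proof in Appendix~\ref{app:OpCode}: the identical row operations $R_0\mapsto R_0-B_1^\intercal(B_2^\intercal)^{-1}R_1$, $R_1\mapsto(B_2^\intercal)^{-1}R_1$, $R_2$ unchanged, verification of the symplectic relations via Eq.~\eqref{eq:pc_comm}, identification of the bottom block with the logical $\Xbar$~gates of Eq.~\eqref{eq:logical0X}, and the sign bookkeeping of Appendix~\ref{app:stabilizer}. You merely spell out the commutator computations and the dimension count that the paper leaves as "it can be checked."
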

	\begin{proof}
		
		We block the check matrix of Eq.~\eqref{eq:checkphy} in the following way,
		\begin{equation}
			\begin{array}{c}
				\text{\scriptsize $R_0$} \\
				\text{\scriptsize $R_1$}\\
				\text{\scriptsize $R_2$}
			\end{array}
			\left(
			\begin{array}{cc|cc}
				\one_{n-k} & 0  &  \Gamma_1 &  \Gamma^\intercal_2  \\
				0 &  \one_k & \Gamma_2 & \Gamma_3 \\
				0 & 0 & B_1 &  B_2
			\end{array}
			\right)\,.
		\end{equation}
		Recall that the code generators and logical gates must satisfy the following relations:
		\begin{align}
			[g_i, g_j] &= 0 
			\label{eq:com0} \\ 
			[g_i, \Xbar_j] &=[g_i,\Zbar_j]=0
			\label{eq:com1} \\
			[\Xbar_i, \Xbar _j] &= 
			[\Zbar_i, \Zbar_j] = 0 
			\label{eq:com2} \\
			[\Xbar_i, \Zbar_j]  &=0 
			&\text{for } i \neq j
			\label{eq:com3} \\
			\Xbar_i \Zbar_j  &= - \Xbar_i \Zbar_j 
			&\text{for } i = j \label{eq:com4}
		\end{align}
		We apply elementary row operations to $R_0$, $R_1$ and $R_2$ so that they fulfil the above commutations relations, leading to $\CC_G$, $\thickbar{Z}$ and $\thickbar{X}$, respectively.
		
		We obtain $\CC_G$ by applying $R_{0} \xrightarrow{\text{row oper.}} \CC_{G}=R_{0}-{B^\intercal _1}(B^\intercal_2)^{-1}R_1$. Hence, 
		\begin{equation}\label{eq:demo}
			\begin{array}{c}
				\text{\scriptsize $\CC_{G}$} \\
				\text{\scriptsize $R_1$}\\
				\text{\scriptsize $R_2$}
			\end{array}
			\left(
			\begin{array}{cc|cc}
				\one_{n-k} & -B^\intercal_1(B_2^\intercal)^{-1} & [\Gamma_{1}-  B^\intercal_1(B_2^\intercal)^{-1}\Gamma_2] & [\Gamma^\intercal_{2}-B^\intercal_1(B_2^\intercal)^{-1}\Gamma_3]  \\ 
				0 &  \one_k & \Gamma_2 & \Gamma_3 \\ 
				0 & 0 & B_1 & B_2 
			\end{array}
			\right)\,.
		\end{equation}
		
		We obtain $\thickbar{Z}_{}$ by applying $R_{1} \xrightarrow{\text{row oper.}} \thickbar{Z}_{}=(B^\intercal_2)^{-1}R_{1}$ and obtain $\thickbar{X}_{}$ by not performing any row operation to $R_2$. Hence, 
		\begin{equation}
			\begin{array}{c}
				\text{\scriptsize $\CC_{G}$} \\
				\text{\scriptsize $\thickbar{Z}_{}$}\\
				\text{\scriptsize $\thickbar{X}_{}$}
			\end{array}
			\left(
			\begin{array}{cc|cc}
				\one_{n-k} & -B^\intercal_1(B_2^\intercal)^{-1} & [\Gamma_{1}-  B^\intercal_1(B_2^\intercal)^{-1}\Gamma_2] & [\Gamma^\intercal_{2}-B^\intercal_1(B_2^\intercal)^{-1}\Gamma_3] \\
				0 & (B^\intercal_2)^{-1} & (B^\intercal_2)^{-1}\Gamma_2 & (B^\intercal_2)^{-1}\Gamma_3 \\
				0 & 0 &  B_1 & B_2
			\end{array}
			\right)\,.
		\end{equation}
		
		With the commutation relations of Eq.~\eqref{eq:pc_comm}, it can be checked that $\CC_G$, $\thickbar{Z}$ and $\thickbar{X}$ satisfies the relations Eq.~\eqref{eq:com0} to Eq.~\eqref{eq:com4}.
		
		Now we show that $\CC_G$, $\thickbar{Z}$ and $\thickbar{X}$ describes the code corresponding to the state Eq.~\eqref{eq:graphst}. 
		First, we note that $\thickbar{X}_{}$ corresponds to the logical $\Xbar$~gates described in Eq.~\eqref{eq:logical0X}. 
		This can be seen by noting that $\thickbar{X}_{}$ contains only non-trivial elements in the $\pZ$-part of the check matrix. 
		Therefore, $\thickbar{X}_{}$ can be written in terms of Pauli matrices as
		\begin{equation}\label{eq:rowtopaulis}
			R_2=\begin{pmatrix}[c|c]
				0 & B
			\end{pmatrix} \xrightarrow[\text{Pauli matrices}]{\text{Check matrix to}} \begin{bmatrix}
				\pZ^{B_{11}} & \ot & \pZ^{B_{k1}} & \ot & \cdots & \ot & \pZ^{B_{1n}} \\
				\vdots &  \ot &  \vdots &  \ot &  \vdots&   \ot&   \vdots \\
				\pZ^{B_{k1}}&  \ot & \pZ^{B_{k2}} & \ot & \cdots&  \ot&  \pZ^{B_{kn}}
			\end{bmatrix}\,. 
		\end{equation}
		The $r$-th row of the right-hand side of 
		Eq.~\eqref{eq:rowtopaulis} is $\Xbar_r$ of Eq.~\eqref{eq:logical0X}.
		
		Furthermore, the rows from $\CC_G$ and $\bar{Z}$ are the generators of the logical zero described in Eq.~\eqref{eq:logical0X}, since both are obtained by only involving rows in $R_0$ and $R_1$.
		Thus the rows $R_1,R_2, R_3$ can be identified with code generators and logical gates corresponding to  Eq.~\eqref{eq:check0}. 
		This ends the proof.
		
	\end{proof}
	
	We note that $\Xbar$ and $\Zbar$~gates are not unique since they can always be multiplied by the stabilizers of the code subspace $\CC$, which will not change their action on the code subspace.
	The phases corresponding to the row operations
	in the proof of Fact~\ref{factlog} are:
	\begin{equation}\label{eq:CandZphases}
		(\thickbar{Z}|\omega^\intercal_{\thickbar{Z}})_i=\sum^k_{l=0} [(B^\intercal_2)^{-1}]_{il} (R_1|0^\intercal)_l
		\,\,, \quad
		(\CC_G|\omega^\intercal_\CC)_i=(R_0)_i+\sum^k_{l=0} [B^\intercal_1(B^\intercal_2)^{-1}]_{il} (R_1|0^\intercal)_l \,.
	\end{equation}

	\section{Encoding}\label{app:Encoding}
	
	\begin{restatable}{fact}{factgates}\label{factgates}
		Let $\ket{G}$ be a graph state described by Eq.~\eqref{eq:check0} with logical gates shown in Eq.~\eqref{eq:checkLogOpxz}. Then, a bulk state $\ket{\phi_{\inc}}$ encodes into a boundary state $\ket{\phi_{\enc}}$ via
		\begin{equation}
			\label{eq:gateUG_app}
			\ket{+}^{\otimes k}_{\MM} \ot \ket{\phi_{\enc}}_{\PP} =
			U_{\op{G}}
			\big(
			\ket{\phi_{\inc}}_{\MM} \ot \ket{+}_{\PP}^{\otimes n}  
			\big)\,,
		\end{equation}
		where $U_{\op{G}} = U_3 U_2 U_1$ with
		\begin{equation}\label{eq:gates_app}
			\begin{aligned}
				U_1  & = \!\!\!\!\!\!\!\! \prod_{
					\substack{
						(u,v)  \in \{\E_\MM, \E_\PP\} 
					} 
				} \!\!\!\!\!\! \CZ_{uv}\,, \quad U_2 =\prod^{k}_{j=1} \CXbar_{j}\,, \\
				U_3 &= \prod^{k}_{j=1} {\CZbar}_{j} 
				\Big( 
				\Had^{\otimes k}_{\MM} \ot \one^{\ot n}_{\PP} 
				\Big) \,.
			\end{aligned}
		\end{equation}
	\end{restatable}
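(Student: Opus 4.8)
The plan is to propagate the initial state $\ket{\phi_{\inc}}_\MM \ot \ket{+}^{\otimes n}_\PP$ through $U_1$, then $U_2$, then $U_3$, expanding the bulk input as $\ket{\phi_{\inc}} = \sum_{i_1\dots i_k} c_{i_1\dots i_k}\ket{i_1\dots i_k}$. The backbone is the intermediate identity $U_2 U_1\bigl(\ket{\phi_{\inc}}_\MM \ot \ket{+}^{\otimes n}_\PP\bigr) = \sum_{i_1\dots i_k} c_{i_1\dots i_k}\,\ket{i_1\dots i_k}_\MM \ot \ket{G_{i_1\dots i_k}}_\PP$, i.e. we reach exactly the structure of the graph state $\ket{G}$ in Eq.~\eqref{eq:graphst} but with the bulk basis vectors reweighted by the input amplitudes. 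The third unitary $U_3$ then has the single job of disentangling the bulk from the boundary by a Hadamard-and-feedback interference, collapsing the bulk to $\ket{+}^{\otimes k}$.

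For the first step, $U_1 = \bigl(\prod_{(u,v)\in\E_\MM}\CZ_{uv}\bigr)\bigl(\prod_{(u,v)\in\E_\PP}\CZ_{uv}\bigr)$ factorizes across $\MM\ot\PP$ since the two edge sets have disjoint support. On the boundary, $\prod_{(u,v)\in\E_\PP}\CZ_{uv}\ket{+}^{\otimes n}$ is the logical zero $\ket{G_{0\dots 0}}$ of Eq.~\eqref{eq:logical0X}; on the bulk, each $\CZ_{uv}$ is diagonal and multiplies $\ket{i_1\dots i_k}$ by $(-1)^{i_u i_v}$, producing exactly the scalar $(-1)^{\sum_{(u,v)\in\E_\MM}i_u i_v}$ that appears in the definition of $\ket{G_{i_1\dots i_k}}$ in Eq.~\eqref{eq:logicalbasis}. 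Next, $U_2 = \prod_{j=1}^k \CXbar_j$ applies $\Xbar_j$ to the boundary conditioned on bulk qubit $j$, so $U_2\bigl(\ket{i_1\dots i_k}_\MM\ot\ket{G_{0\dots 0}}_\PP\bigr) = \ket{i_1\dots i_k}_\MM\ot\prod_{r=1}^k \Xbar_r^{i_r}\ket{G_{0\dots 0}}_\PP$; together with the $U_1$ sign this is precisely $\ket{i_1\dots i_k}_\MM\ot\ket{G_{i_1\dots i_k}}_\PP$ by Eq.~\eqref{eq:logicalbasis}, and the $\ket{G_{i_1\dots i_k}}$ form an orthonormal code basis by Fact~\ref{factstate}. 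This establishes the intermediate identity.

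For $U_3 = \bigl(\prod_{j=1}^k \CZbar_j\bigr)\bigl(\Had^{\otimes k}_\MM\ot\one^{\otimes n}_\PP\bigr)$: the Hadamard layer turns $\ket{i_1\dots i_k}_\MM$ into $2^{-k/2}\sum_{j_1\dots j_k}(-1)^{\sum_r i_r j_r}\ket{j_1\dots j_k}_\MM$, and then $\prod_r \CZbar_r$ applies $\Zbar_r$ to the boundary whenever $j_r = 1$. The crux is that the logical $\Zbar_r$ act diagonally on the code basis, $\Zbar_r\ket{G_{i_1\dots i_k}} = (-1)^{i_r}\ket{G_{i_1\dots i_k}}$. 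This holds because $\Zbar_r$ lies in the stabilizer group of $\ket{G_{0\dots 0}}$: by the construction in Fact~\ref{factlog} the rows $\thickbar{Z}$ of Eq.~\eqref{eq:checkLogOpxz} are obtained by combining rows from within the boundary-vertex generator block of Eq.~\eqref{eq:checkphy}, which spans the stabilizer generators of the boundary graph state $\ket{G_{0\dots 0}}$; combined with $\Xbar_s\Zbar_r = (-1)^{\delta_{rs}}\Zbar_r\Xbar_s$ (Fact~\ref{factlog}) and the fact that the global sign in $\ket{G_{i_1\dots i_k}}$ passes through freely, this gives the claim. Hence $\prod_r\CZbar_r$ contributes a second factor $(-1)^{\sum_r i_r j_r}$ which cancels the Hadamard phase; the $j$-sum then decouples from $i$ and equals $\ket{+}^{\otimes k}_\MM$, leaving $\ket{+}^{\otimes k}_\MM \ot \sum_{i_1\dots i_k} c_{i_1\dots i_k}\ket{G_{i_1\dots i_k}}_\PP = \ket{+}^{\otimes k}_\MM\ot\ket{\phi_{\enc}}_\PP$, which is Eq.~\eqref{eq:gateUG_app}.

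I expect the main obstacle to be the sign bookkeeping in $\Zbar_r\ket{G_{i_1\dots i_k}} = (-1)^{i_r}\ket{G_{i_1\dots i_k}}$: one must use the \emph{genuine} logical operators, i.e. the rows of Eq.~\eqref{eq:checkLogOpxz} together with the phase corrections recorded in Appendix~\ref{app:OpCode} and Eq.~\eqref{eq:CandZphases}, and argue that these — with their true signs — fix $\ket{G_{0\dots 0}}$; the cleanest route is to note that $\Zbar_r$ is a product of stabilizer generators of $\ket{G_{0\dots 0}}$, so the stabilizer group (including the signs forced by the Pauli algebra) fixes the state by closure. Everything else — commutativity of all the $\CZ$ gates, the disjoint-support factorization of $U_1$, orthonormality of the code basis, the Hadamard interference — is routine once this point is pinned down. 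As a sanity check, $U_3$ implements the standard one-bit teleportation identity: it measures each bulk qubit in the $X$ basis and applies the logical $Z$ byproduct to the boundary.
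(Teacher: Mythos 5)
Your proposal is correct and follows essentially the same route as the paper's proof in Appendix~\ref{app:Encoding}: propagate the initial state through $U_1$, $U_2$, $U_3$ in turn, reach the intermediate state $\sum_{\bmR{i}} c_{\bmR{i}}\ket{\bmR{i}}_\MM\ot\ket{G_{\bmR{i}}}_\PP$ after $U_2U_1$, and then cancel the Hadamard phases via the diagonal action $\Zbar_r\ket{G_{\bmR{i}}}=(-1)^{i_r}\ket{G_{\bmR{i}}}$. If anything, you are more explicit than the paper on the one step it asserts without comment — that $\Zbar_r$ stabilizes $\ket{G_{0\dots0}}$ (being a product of the boundary graph-state generators from Eq.~\eqref{eq:checkphy}) and anticommutes only with $\Xbar_r$, which is exactly the right justification.
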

	
	\begin{proof}
		Given $\ket{G}$ described in Eq.~\eqref{eq:graphst}, the isometry $T_{\op{G}}$ can be written as
		\begin{equation}\label{eq:TG}
			T_{\op{G}}= \sum_{\bmR{i}\in \mathbb{Z}^k_2} \ket{G_{\bmR{i}}}_\PP  \bra{\bmR{i}}_\MM\;,
		\end{equation}
		with $\bmR{i}=(i_1,\dots,i_k) \in \mathbb{Z}^k_2$. 
		Given an initial bulk state as $\ket{\phi_{\inc}}_{\MM}=\sum_{\bmR{i}\in\mathbb{Z}^k_2} c_{\bmR{i}} \ket{\bmR{i}}_{\MM}$, the encoded state should then
		read
		\begin{equation}\label{eq:encodedstate}
			\ket{\phi_{\enc}}_{\PP}=T_{\op{G}}\ket{\phi_{\inc}}_{\MM} =\sum_{{\bmR{i}}\in\mathbb{Z}^k_2} c_{\bmR{i}} \ket{G_{\bmR{i}}}_\PP\,.
		\end{equation}
		
		To obtain the encoding through Eq.~\eqref{eq:gateUG_app}, 
		initialize the boundary state in 
		$\ket{+}_{\PP}^{\otimes n}$ 
		and apply $U_1$,
		\begin{align}
			U_1 \big(\ket{\phi_{\inc}}_\MM \ot \ket{+}_{\PP}^{\otimes n} \big) &= \!\!\!\!\!\!\!\! \prod_{
				\substack{
					(u,v)  \in \{\E_\MM, \E_\PP\}  
				} 
			} \!\!\!\!\!\! \CZ_{uv}  \big(\ket{\phi_{\inc}}_\MM \ot \ket{+}_{\PP}^{\otimes n}\big) \nonumber \\ 
			& = \prod_{(u,v)\in \E_\MM} {\CZ_{uv}}\ket{\phi_{\inc}}_{\MM} \;\; \ot \!  \prod_{(u,v)\in \E_\PP} \CZ_{uv} \ket{+}_{\PP}^{\otimes n} \nonumber \\ 
			&= \prod_{(u,v)\in \E_\MM} {\CZ_{uv}} \ket{\phi_{\inc}}_{\MM} \ot \ket{G_{0\dots0}}_{\PP} \nonumber \\ 
			&= \sum_{{\bmR{i}}\in\mathbb{Z}^k_2} c_{\bmR{i}}(-1)^{\sum_{(u,v)\in \E_\MM}   i_vi_u} \ket{\bmR{i}}_{\MM} \ot \ket{G_{0\dots0}}_{\PP}\label{eq:U1result}\,.
		\end{align}
		Here we used the definition of the logical zero state 
		[see Eq.~\eqref{eq:logical0X}] and added the corresponding real phases as shown in Eq.~\eqref{eq:phases}.
		
		Second, apply $U_2$ to Eq.~\eqref{eq:U1result} and obtain

		\begin{align}
			U_2\Big( \sum_{{\bmR{i}}\in\mathbb{Z}^k_2} c_{\bmR{i}}(-1)^{\sum_{(u,v)\in \E_\MM}   i_vi_u} \ket{\bmR{i}}_{\MM} \ot \ket{G_{0\dots0}}_{\PP} \Big)
			&=\prod^k_{j=1} \CXbar_j \Big( \sum_{{\bmR{i}}\in\mathbb{Z}^k_2} c_{\bmR{i}}(-1)^{\sum_{(u,v)\in \E_\MM}   i_vi_u} \ket{\bmR{i}}_{\MM} \ot \ket{G_{0\dots0}}_{\PP} \Big) \nonumber\\ 
			&=\sum_{{\bmR{i}}\in\mathbb{Z}^k_2}  \ket{\bmR{i}}_{\MM} \ot \Big(c_{\bmR{i}} (-1)^{\sum_{(u,v)\in \E_\MM} i_vi_u}  \prod^k_{r=1}\pX_r^{i_r}\ket{G_{0\dots0}}_{\PP} \Big) \nonumber \\
			&= \sum_{{\bmR{i}}\in\mathbb{Z}^k_2} c_{\bmR{i}}\ket{\bmR{i}}_{\MM}\ot \ket{G_{\bmR{i}}}_{\PP}  \label{eq:U2result}\;,
		\end{align}
		where we use the definition of the logical states [Eq.~\eqref{eq:logicalbasis}].
		
		Finally, we apply $U_3$ to Eq.~\eqref{eq:U2result} and obtain
		\begin{align}
			U_3 \Big(\sum_{{\bmR{i}}\in\mathbb{Z}^k_2} c_{\bmR{i}}\ket{\bmR{i}}_{\MM} \ot \ket{G_{\bmR{i}}}_{\PP} \Big)
			&=\prod^k_{j=1} \CZbar_j \Big(  \Had_{\MM}^{\otimes k} \ot \one_{\PP}^{\ot n}  \Big)\sum_{{\bmR{i}}\in\mathbb{Z}^k_2} c_{\bmR{i}}  \ket{\bmR{i}}_{\MM} \ot \ket{G_{\bmR{i}}}_{\PP}  \nonumber\\ 
			&=  \frac{1}{\sqrt{2^{k}}} \prod^k_{j=1} \CZbar_j \sum_{{\bmR{i},\bmR{q}}\in\mathbb{Z}^k_2}  c_{\bmR{i}} (-1)^{\bmR{i} \cdot \bmR{q}}  \ket{\bmR{q}}_{\MM}  \ot \ket{G_{\bmR{i}}}_{\PP} \nonumber\\
			&= \frac{1}{\sqrt{2^{k}}} \sum_{\bmR{q}\in\mathbb{Z}^k_2} \ket{\bmR{q}}_{\MM}\ot \Big( \sum_{{\bmR{i}}\in\mathbb{Z}^k_2}  c_{\bmR{i}} \ket{G_{\bmR{i}}}_{\PP} \Big) \,. \label{eq:appgates1}
		\end{align}
		Here we used that $\Had\ket{i}=\frac{1}{\sqrt{2}}\sum_{q\in\mathbb{Z}_2} (-1)^{i \cdot q} \ket{q}$ and with 
		$\CZbar_j =
		\ketbra{0}{0}_{j} \otimes \one +\ketbra{1}{1}_{j}  \otimes \Zbar$
		that
		\begin{equation}
			\prod^k_{j=1} \CZbar_j \Big( \ket{\bmR{q}}_{\MM}\ot \ket{G_{\bmR{i}}}_{\PP} \Big)=
			\prod^k_{r=1}  \ket{\bmR{q}}_{\MM} 
			\ot
			\Zbar^{q_r}_r 
			\ket{G_{\bmR{i}}}_{\PP}  =
			\prod^k_{r=1} (-1)^{i_r\cdot q_r} \ket{\bmR{q}}_{\MM}\ot \ket{{G_{\bmR{i}}}}_{\PP}= 
			(-1)^{\bmR{i} \cdot \bmR{q}} \ket{\bmR{q}} \ot\ket{G_{\bmR{i}}}_{\PP}\;.
		\end{equation}
		After applying those gates, Eq.~\eqref{eq:appgates1} becomes a tensor product state
		\begin{equation}
			\begin{aligned}
				\frac{1}{\sqrt{2^{k}}} \sum_{\bmR{q}\in\mathbb{Z}^k_2} \ket{\bmR{q}}_{\MM} \ot \Big( \sum_{{\bmR{i}}\in\mathbb{Z}^k_2}  c_{\bmR{i}} \ket{G_{\bmR{i}}}_{\PP} \Big) = 
				\ket{+}_{\MM}^{\otimes k} \ot     \ket{\phi_{\enc}}_{\PP} \;,
			\end{aligned}
		\end{equation}
		where we use the definition of encoded state [see Eq.~\eqref{eq:encodedstate}]. Hence,
		\begin{equation}
			U_3 U_2 U_1 \big( \ket{\phi_{\inc}}_{\MM} \ot \ket{+}_{\PP}^{\otimes n} \big)=  \ket{+}^{\otimes k}_{\MM} \ot \ket{\phi_{\enc}}_{\PP}\,.
		\end{equation}
		This ends the proof.
	\end{proof}
	
	\section{Consequences of local Clifford operations}\label{app:Pdec}
	
	Contracting the tensor network of the hyperbolic pentagon code leads to the holographic state $\ket{H}$, 
	which represents an isometry $T_{\op{H}}$ that encodes $k$ bulk qubits into $n$ boundary qubits.
	Performing this contraction only partially, so that it involves the bulk region $\regE$ and its nearby boundary $\regdE$, leads to $\ket{h}$. 
	This state can then be understood as an isometry $T_h$ from $\regE \cup \gamma$ to $\regdE$. 
	Given a state which was encoded through the isometry $T_{\op{H}}$, one can apply~$T^{\dagger}_{\op{h}}$ to $\regdE$ and recover the bulk region $\regE$. 
	
	How is this partial recovery performed for the local Clifford 
	optimized state
	$\ket{G} = \Vop \ket{H}$ [see Eq.~\eqref{eq:Holotograph}]? 
	With Eq.~\eqref{eq:state} write 
	\begin{equation}\label{eq:HtoV}
		\ket{G}=(\Vop_{\MM} \ot \Vop_{\PP})\ket{H} =\!\!\sum^1_{i_1, \dots, i_k=0}    \Vop_{\MM} \ket{i_1 \dots i_k}_\MM \ot \Vop_{\PP} \ket{H_{i_1 \dots i_k}}_\PP \,,
	\end{equation}
	where ${V_\MM}$ and ${V_{\PP}}$ 
	contain the local Cliffords of $\Vop$ having support on $\MM$ and $\PP$, respectively.
	Flipping the ket acting on $\MM$ into a bra
	then yields the corresponding isometry
	\begin{equation}\label{eq:AppPdec0}
		T_{\op{G}}
		=
		\Vop_{\PP} T_H \Vop_{\MM}
		=
		\sum^1_{i_1, \dots, i_k=0}   \Vop_{\PP} 
		\ket{H_{i_1 \dots i_k}}_\PP 
		\bra{i_1 \dots i_k}_\MM \Vop_{\MM}\,.
	\end{equation}
	This establishes a relation between the encoding via $\ket{H}$ and that via $\ket{G}$, and in turn, also for the partial isometries 
	$\ket{h}$ and $\ket{g}$ as we explain now.
	While for the encoding $T_{\op{H}}$ a partial decoding is given by $T^\dagger_{\op{h}}$, Eq.~\eqref{eq:AppPdec0} shows that the encoding $T_{\op{G}}$ is partially decoded by 
	\begin{equation}\label{eq:TGh}
		\widetilde{T}^\dagger_{\op{h}}=(\Vop^\dagger_{\regE} \ot \one_\gamma)T^\dagger_{\op{h}} \Vop^\dagger_{\regdE} \,.
	\end{equation}
	Here $\Vop_\regE$ and $\Vop_{\regdE}$ 
	contain the local Cliffords of $\Vop$ having support on $\regE$ and $\regdE$.
	
	This isometry $\tilde{T}_{\op{h}}$ partially decodes a state encoded through $T_{\op{G}}$. 
	We now need to derive the corresponding quantum gates.
	For this,  consider the local Clifford map from $\ket{h}$ to $\ket{g}$.  
	By applying the encoding method described in Eq.~\eqref{eq:gateUG} for $\ket{g}$, we obtain the encoding unitary ${U_g}$ that is equivalent to $T_{\op{g}}$.
	Analogous to Eq.~\eqref{eq:AppPdec0}, the isometries $T_{\op{g}}$ and $T_{\op{h}}$ are related by 
	\begin{equation}\label{eq:th}
		T^\dagger_{\op{h}}=(\Wop_{\regE}\ot \one_{\gamma}) T^\dagger_{\op{g}} \Wop_{\regdE}\,,
	\end{equation}
	where $\ket{g}=\Wop\ket{h}$ with $\Wop$ a local Clifford gate. Here ${W_{\regE}}$ and ${W_{\regdE}}$ 
	contain the local Cliffords of $\Wop$ having support on $\regE$ and $\regdE$, respectively. Since we are only interested in recovering $\regE$, one can ignore the action of $\Wop$ on $\gamma$. Combining Eqs.~\eqref{eq:TGh} and \eqref{eq:th}, we obtain
	\begin{equation}\label{eq:th1}
		\widetilde{T}^\dagger_{\op{h}}=(\Vop^\dagger_{\regE}\Wop_{\regE} \ot \one_\gamma) T^\dagger_{\op{g}} \Wop_{\regdE}\Vop^\dagger_{\regdE}\,.
	\end{equation}
	
	Eq.~\eqref{eq:th1} can be understood schematically as a chain of isometries
	\begin{equation}
		T^\dagger_{\op{g}} \; \xrightarrow[]{W} \; T^\dagger_{\op{h}} \; \xrightarrow[]{V^\dagger} \; \widetilde{T}^\dagger_{\op{h}} \,.
	\end{equation}
	Finally, we express the isometry $\widetilde{T}_{\op{h}}^\dagger$ as a unitary gate ${\widetilde{U}^\dagger_{\op{h}}}$,
	
	\begin{equation}\label{eq:recgates}
		\widetilde{U}^\dagger_{\op{h}}=(\Vop^\dagger_{\regE}\Wop_{\regE} \ot \one_{\gamma\regdE})\,U^\dagger_{\op{g}}\, (\one_{\regE\gamma}\ot \Wop_{\regdE}\Vop^\dagger_{\regdE})\,.
	\end{equation}
	For the  graph $\ket{G^{\text{opt}}}$ illustrated in Fig.~\ref{fig:MinimalGraph}, the unitary $\Vop = {\Vop}_{\PP} \ot {\Vop}_{\MM}$ is
	\begin{equation}\label{eq:unitaryV}
		\begin{matrix}[c ccc ccc ccc ccc c cccc]
			&\text{\footnotesize1}&\text{\footnotesize2}&\text{\footnotesize3}&
			\text{\footnotesize4}&\text{\footnotesize5}&\text{\footnotesize6}&
			\text{\footnotesize7}&\text{\footnotesize8}&\text{\footnotesize9}&
			\text{\footnotesize10}&\text{\footnotesize11}&\!\!\!\text{\footnotesize12} 
			
			& & \text{\footnotesize A}&\text{\footnotesize B}&\text{\footnotesize C}&\!\!
			\!\text{\footnotesize D}\\
			{{{\Vop}_{\PP}}} =&
			\one&\one&\Had&
			\one&\one&\Had&
			\one&\one&\Had&
			\one&\one&\Had \,\,,
			&\quad{{{\Vop}_{\MM}}}  =&\pZ&\pZ&\pZ&\pZ \,.
		\end{matrix}
	\end{equation}
	\begin{figure}
		\centering    
		\includegraphics[width=0.8\textwidth]{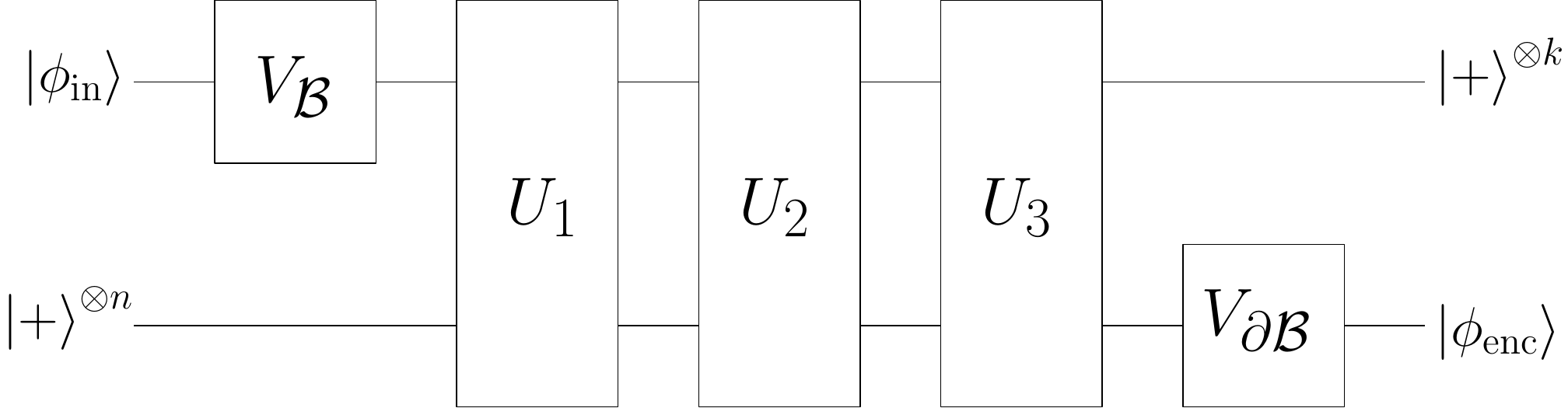}    \caption{\textbf{Local Clifford-equivalent encoding.} 
			Gates used to correct the encoding, when one wants to use the exact holographic state for the encoding $\ket{H}$ instead of its LC-equivalent state $\ket{G}$. The unitary gates ${U_1}$, ${U_2}$ and ${U_3}$ are the encoding gates described in Eq.~\eqref{eq:gates}.}
		\label{fig:gates}
	\end{figure}For the graph  $\ket{g}$ illustrated in Fig.~\ref{fig:rec}, the unitary $\Wop = {\Wop}_{\regdE} \ot {\Wop}_{\regE\gamma}$ is
	\begin{equation}\label{eq:unitaryW}
		\begin{matrix}[cccccc cccccc]
			&\text{\footnotesize1}&\text{\footnotesize2}&\text{\footnotesize3}&\text{\footnotesize4}&\!\!
			\!\text{\footnotesize5}
			& &\text{\footnotesize \text{I}}&\text{\footnotesize \text{A}}&{\text{\footnotesize III}}&{\text{\footnotesize II}}&\!\!
			\!\text{\footnotesize B} \\
			{{{\Wop}_{\regdE}}} =&
			\one&\one&\Had&\one&\one\,,& 
			\quad {{{\Wop}_{\regE\gamma}}}  =&\one&\pZ&\one&\one&\one\,. 
		\end{matrix}
	\end{equation}
	
	Note that for these two graphs ($\ket{G^{\text{opt}}}$ and $\ket{g}$), Eq.~\eqref{eq:recgates} is reduced to    $\widetilde{U}^\dagger_{\op{h}}=Z_BU^\dagger_{\op{g}}$.
	\section{Optimal code subspace generators, logical gates and fidelity calculation}\label{app:fidelity}
	
	From Eq.~\eqref{eq:checkLogOpxz}, the subspace generators of the graph code illustrated in Fig~\ref{fig:MinimalGraph} can be obtained:
	\begin{equation}\label{eq:gen}
		\begin{matrix}[ccc ccc ccc ccc ccc]
			&&&\text{\footnotesize1}&\text{\footnotesize2}&\text{\footnotesize3}&\;
			\text{\footnotesize4}&\text{\footnotesize5}&\text{\footnotesize6}&\;
			\text{\footnotesize7}&\text{\footnotesize8}&\text{\footnotesize9}&\; \text{\footnotesize10}&\!\text{\footnotesize11}&\!\!\!\!\text{\footnotesize12}\\
			g_1 &=&&
			\pX&\pZ&\pX&\, 
			\pZ&\pZ&\pZ&\,
			\one&\one&\one&\,
			\one\!&\one&\!\one \,, \\
			g_2 &=&&
			\one&\one&\one&\,
			\pX&\pZ&\pX&\, 
			\pZ&\pZ&\pZ&\,
			\one\!&\one&\!\one \,, \\
			g_3 &=&&
			\one&\one&\one&\,
			\one&\one&\one&\,
			\pX&\pZ&\pX&\, 
			\pZ\!&\pZ&\!\pZ\, \,, \\
			g_4 &=&&
			\pZ&\pZ&\pZ&\,
			\one&\one&\one&\,
			\one&\one&\one&\,
			\pX\!&\pZ&\!\pX\,, \\ 
		\end{matrix} \quad
		\begin{matrix}[ccc ccc ccc ccc ccc]
			&&&\text{\footnotesize1}&\text{\footnotesize2}&\text{\footnotesize3}&\;
			\text{\footnotesize4}&\text{\footnotesize5}&\text{\footnotesize6}&\;
			\text{\footnotesize7}&\text{\footnotesize8}&\text{\footnotesize9}&\; \text{\footnotesize10}&\!\text{\footnotesize11}&\!\!\!\!\text{\footnotesize12}\\
			g_5 &=&&
			\pY&\pX&\pZ&\, 
			\pX&\pX&\pY&\,
			\one&\one&\one&\,
			\one\!&\one&\!\one \,, \\
			g_6 &=&&
			\one&\one&\one&\,
			\pY&\pX&\pZ&\, 
			\pX&\pX&\pY&\,
			\one\!&\one&\!\one \,, \\
			g_7 &=&&
			\one&\one&\one&\,
			\one&\one&\one&\,
			\pY&\pX&\pZ&\, 
			\pX \!&\pX& \! \pY\,, \\
			g_8 &=&& 
			\pX&\pX&\pY&\,
			\one&\one&\one&\,
			\one&\one&\one&\,
			\pY\!&\pX&\!\pZ \,. \\  
		\end{matrix}
	\end{equation}
	Recall that the code subspace generators are not unique, since multiplication between them can also be used as a generator. Eq.~\eqref{eq:gen} shows those that minimize the number of non-trivial Pauli elements. 
	Similarly, one can reduce the number of non-trivial elements from the logical gates [Eq.~\eqref{eq:LogicalX} and \eqref{eq:LogicalZ}] by multiplying said gates by code subspace generators. This procedure leads to
	\begin{equation}\label{eq:LogicalXZopt}
		\begin{matrix}[cc ccc ccc ccc ccc]
			&&\text{\footnotesize1}&\text{\footnotesize2}&\text{\footnotesize3}&\;
			\text{\footnotesize4}&\text{\footnotesize5}&\text{\footnotesize6}&\;
			\text{\footnotesize7}&\text{\footnotesize8}&\text{\footnotesize9}&\; \text{\footnotesize10}&\!\text{\footnotesize11}&\!\!\!\!\text{\footnotesize12}\\
			\thickbar{Z}_{\op{A}} &=&
			\pZ&\pX&\pX&\;
			\one&\one&\one&\;
			\one&\one&\one&\;
			\one\!&\one&\!\one \,,\\
			\thickbar{Z}_{\op{B}}&=&
			\one&\one&\one&\;
			\pZ&\pX&\pX&\;
			\one&\one&\one&\; 
			\one&\one&\one\,,\\
			\thickbar{Z}_{\op{C}}&=&
			\one&\one&\one&\;
			\one&\one&\one&\;
			\pZ&\pX&\pX&\;
			\one&\one&\one\,,\\ 
			\thickbar{Z}_{\op{D}}&=&
			\one&\one&\one&\;
			\one&\one&\one&\;
			\one&\one&\one&\;
			\pZ&\pX&\pX\,,
		\end{matrix}\quad
		\begin{matrix}[ccc ccc ccc ccc ccc]
			&&&\text{\footnotesize1}&\text{\footnotesize2}&\text{\footnotesize3}&\;
			\text{\footnotesize4}&\text{\footnotesize5}&\text{\footnotesize6}&\;
			\text{\footnotesize7}&\text{\footnotesize8}&\text{\footnotesize9}&\; \text{\footnotesize10}&\!\text{\footnotesize11}&\!\!\!\!\text{\footnotesize12}\\
			\thickbar{X}_{\op{A}} &=&-& \pY &\pZ& \pY&\,
			\one& \one& \one&\, 
			\one&\one&\one&\,
			\one&\one&\one \,, \\
			\thickbar{X}_{\op{B}}&=&-&
			\one&\one&\one&\,
			\pY &\pZ& \pY&\,
			\one& \one& \one& 
			\one&\one&\one \,, \\
			\thickbar{X}_{\op{C}} &=&-&
			\one&\one&\one&\,
			\one&\one&\one&\,
			\pY &\pZ& \pY&\,
			\one &\one&\one \,, \\
			\thickbar{X}_{\op{D}} &=&-&
			\one&\one&\one&\,
			\one&\one&\one&\,
			\one & \one & \one &
			\pY&\pZ&\pY \,.
		\end{matrix}
	\end{equation}
	With this optimization the number of gates is reduced to perform the encoding shown in \hyperref[sect:Encircuit]{\textit{Encoding circuit}}. Nevertheless, we lose the geometric interpretation of logical $\Xbar$ gates described in Eq.~\eqref{eq:LogicalX}.
	
	The logical zero state is stabilized by $\langle g_1,\dots,g_8, \Zbar_A, \Zbar_B, \Zbar_C, \Zbar_D\rangle$. 
	A way to prepare the logical zero state is through non-destructive
	stabilizer measurements \cite[Chapter 10.5.8]{nielsen_chuang_2010}, as for
	example recently done in a trapped-ion setup~\cite{Abobeih_2022}. The number of two-qubit gates to encode required by this method is the same as the number of non-trivial elements of the generators of the logical zero state plus two single-qubit gates per measurement.
	Multiplying $\{g_i\}$ by the $\Zbar$ logic gates, we can obtain eight new generators $\{g'_i\}$, with only four non-trivial Pauli elements, that substitute $\{g_i\}$, e.g. $g'_1=\Zbar_{\op{A}}, \Zbar_{\op{B}} g_1$.
	Therefore, the stabilizer of the logic zero state is also described by $\langle g'_1,\dots,g'_8, \Zbar_{\op{A}}, \Zbar_{\op{B}}, \Zbar_{\op{C}}, \Zbar_{\op{D}}\rangle$. 
	Taking that into account, we require $44$ two-qubit gates, $24$ single-qubit gates and $12$ measurements to encode the logical zero state via stabilizer measurements. 
	On the other hand, in \hyperref[sect:preparingLog0]{\textit{Preparing the logical states}} we show that the logical zero state can be implemented as a graph state with only $28$ two-qubit gates and $12$ single-qubit gates.
	
	Both methods can be compared by calculating the circuit fidelity which generates the logical zero state. The fidelity of the quantum circuit can be estimated by
	\begin{equation}\label{eq:fidelity}
		F=(F_m)^{N_m} (F_2)^{N_2} (F_1)^{N_1} \quad \text{with} \quad \frac{\delta F}{F}=\sqrt{\Big(N_1\frac{\delta F_1}{F_1}\Big)^{2} + \Big(N_2\frac{\delta F_2}{F_2}\Big)^{2}+\Big(N_m\frac{\delta F_m}{F_m}\Big)^{2} }\,.
	\end{equation}
	Here $N_1$ and $N_2$ are the number of single-qubit and two-qubit gates respectively and $N_m$ is the number of measurements.
	Ref.~\cite{Quantinuum} proposes a trapped-ion setup with single-qubit and two-qubit gate fidelity of $0.99994(3)$ and $0.9981(3)$ respectively. Furthermore, the measurement fidelity of the device is $0.9972(5)$. 
	This setup allows to calculate the fidelity of the logical zero state. By Eq.~\eqref{eq:fidelity}, we estimate a fidelity of $0.947(8)$ using our method and of $0.88(2)$ using stabilizer measurements. The fidelity calculation shows a significant advantage in preparing the logical zero state as a graph state.
	
	\section{Larger instance of the hyperbolic pentagon code}\label{app:larger}
	
	Our work is mainly focused in implementing the hyperbolic pentagon code for the small instance shown in Figure~\ref{fig:holo_larger}a. %
	This corresponds to contracting four AME states, transforming the result to a graph state and reducing the number of edges and its range interaction following the optimization process described in
	\hyperref[sect:holographstate]{\textit{Holographic graph state}}. One optimization method is via the algorithm described in Ref.~\cite{Adcock2020mappinggraphstate}. However, this method does not work for larger instances of the hyperbolic pentagon code. 
	Another method consists in heuristically applying Hadamard gates to transform $\ket{H}$ to $\ket{G}$ as shown in Eq.~\eqref{eq:Holotograph}. This is the method used in this section to find an improved graph state for a larger instance of the hyperbolic pentagon code.
	
	We contract eleven AME states following the tensor network configuration shown in the highlighted region from Figure~\ref{fig:holo_larger}a.
	The result is a stabilizer state of $36$ qubits which can be transformed to the graph state shown in Figure~\ref{fig:holo_larger}b by applying 
	$20$ Hadamard gates to the highlighted qubits in purple. The used instance encodes $11$ bulk to $25$ boundary qubits.
	
	Note that the graph state from Figure~\ref{fig:holo_larger}b is also rotational invariant and the interactions has some locality attached.
	The graph has $195$ edges, $85$ of those correspond to connections between boundary qubits. 
	Therefore, we would need $85$ controlled-$\pZ$ gates to implement the logical zero.

	\begin{figure*}[tbp]
		\centering
		\includegraphics[width=0.8\textwidth]{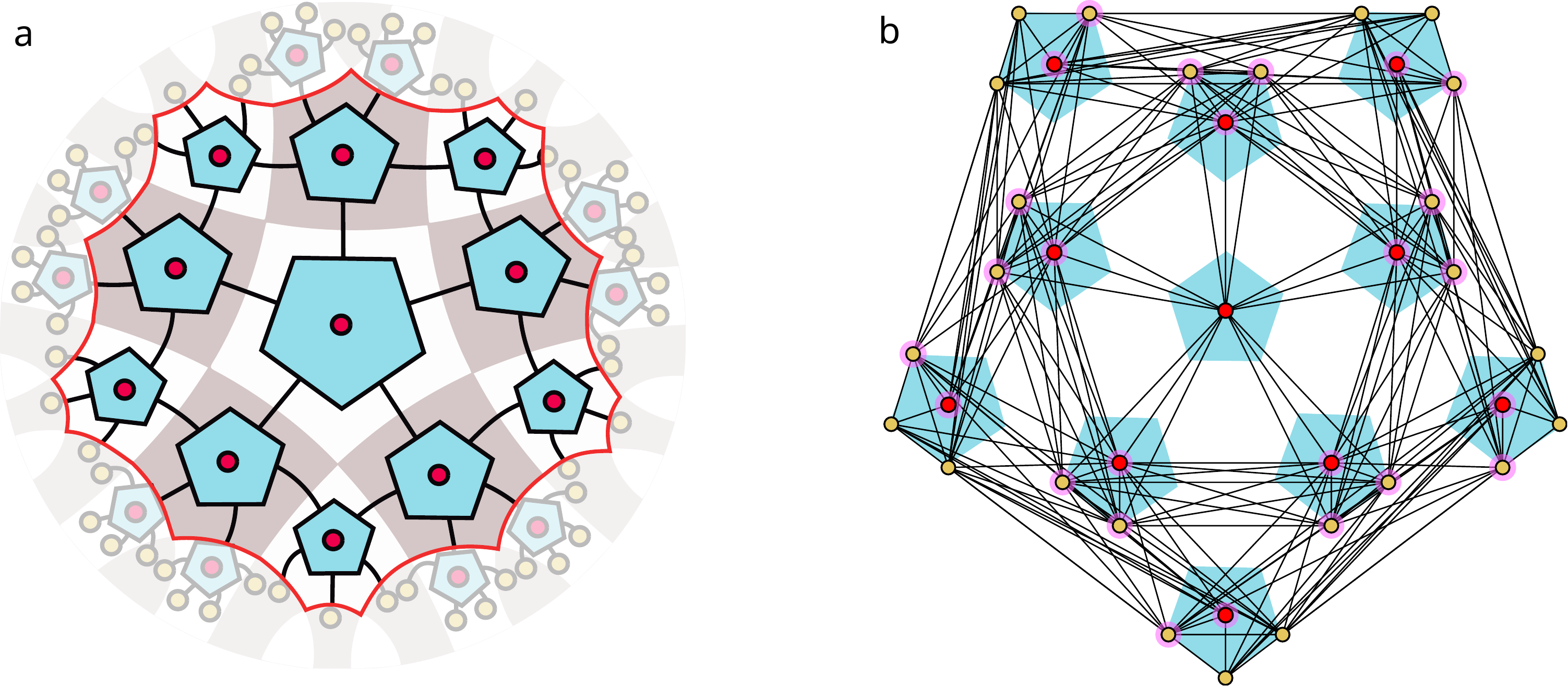}
		\caption{\textbf{Larger instance of the hyperbolic pentagon code transformed to a graph state.} As in Fig.~\ref{fig:holo}b, the left figure shows the highlighted region of the hyperbolic pentagon code which corresponds to the graph code in the right. 
			The AME states (blue pentagons) are contracted following the tensor network representation from Figure~\ref{fig:holo_larger}a leading to the state $\ket{H}$.
			This state encodes $11$ bulk (red) qubits  to $25$ in the boundary (golden) qubits. 
			To transform $\ket{H}$ to the graph state $\ket{G}$ from Figure~\ref{fig:holo_larger}b, we applied $20$ Hadamard gates to the qubits highlighted in purple. The resulting graph state has $195$ edges and $36$ nodes.}
		\label{fig:holo_larger}
	\end{figure*}
	
	\bibliography{Notes}
	\end{document}